\newcommand{\cmark}{\ding{51}}%
\newcommand{\xmark}{\ding{55}}%
\DeclarePairedDelimiter\floor{\lfloor}{\rfloor}  
\newtheorem{prop}{Lemma}
\newtheorem{thm}{Theorem}
\newtheorem{cor}{Remark}
\newcommand\blfootnote[1]{%
		\begingroup
		\renewcommand\thefootnote{}\footnote{#1}%
		\addtocounter{footnote}{-1}%
		\endgroup
}
\title{Mobility State Detection of Cellular-Connected UAVs based on Handover Count Statistics}
\author{Md Moin Uddin Chowdhury, Priyanka Sinha, Kim Mahler, and
\.{I}smail G\"{u}ven\c{c}
\thanks{M.M.U. Chowdhury, P. Sinha, and \.{I}. G\"{u}ven\c{c} are with the Department of Electrical and Computer Engineering, North Carolina State University, Raleigh, NC 27606 (e-mail:~\{mchowdh,psinha2,iguvenc\}@ncsu.edu).}
\thanks{K. Mahler is with Voltela Inc., 11 E Loop Rd, Suite 381 New York, NY 10044  (e-mail:~kim.mahler@voltela.com).}}
\begin{document}
\pdfoutput=1
\maketitle
\begin{abstract}
To ensure reliable and effective mobility management for aerial user equipment (UE), estimating the speed of cellular-connected unmanned aerial vehicles (UAVs) carries critical importance since this can help to improve the quality of service of the cellular network. The 3GPP LTE standard uses the number of handovers made by a UE during a predefined time period to estimate the speed and the mobility state efficiently. In this paper, we introduce an approximation to the probability mass function of handover count (HOC) as a function of a cellular-connected UAV's height and velocity, HOC measurement time window, and different ground base station (GBS) densities. Afterward, we derive the Cramer-Rao lower bound (CRLB) for the speed estimate of a UAV, and also provide a simple biased estimator for the UAV's speed which depends on the GBS density and HOC measurement period. Interestingly, for a low time-to-trigger (TTT) parameter, the biased estimator turns into a minimum variance unbiased estimator (MVUE). By exploiting this speed estimator, we study the problem of detecting the mobility state of a UAV as low, medium, or high mobility as per the LTE specifications. Using CRLBs and our proposed MVUE, we characterize  the accuracy improvement in speed estimation and mobility state detection as the GBS density and the HOC measurement window increase. Our analysis also shows that the accuracy of the proposed estimator does not vary significantly with respect to the TTT parameter.    


\end{abstract}
\begin{IEEEkeywords}
3GPP, advanced aerial mobility (AAM), antenna
radiation, Cramer-Rao lower bound, estimation, minimum variance unbiased (MVU), unmanned aerial vehicle (UAV).
\end{IEEEkeywords}
\section{Introduction}
\blfootnote{This work is supported by NSF grants CNS-1453678 and CNS-1910153. Part of this work was presented at IEEE Signal Processing Advances in Wireless Communications (SPAWC) Workshop in 2020~\cite{chowdhurySPAWC2020}.}

Thanks to their flexibility in deployment and their low production/operational costs, using unmanned aerial vehicles (UAVs) have attracted significant interest for a wide range of commercial and civilian applications in recent years~\cite{rui1,geraci_2018,halim_mobility}. For taking full advantage of UAV deployments, beyond visual line of sight (BVLOS) operations are of critical importance where UAVs can fly autonomously without direct human control. Most importantly, such autonomous UAV flights will enable future commercial applications such as air taxis and air ambulances by means of dedicated air corridors for safe and efficient operations of aerial vehicles~\cite{sinha2022wireless}. 

Thanks to the recent efforts from Federal Aviation Agency (FAA), it is now allowed to fly a UAV autonomously in BVLOS scenarios by using the commercial cellular service \cite{VzWlteDrone}. FAA has also signed a three-year agreement with Verizon-owned Skyward to experiment with the use of cellular-connected UAVs \cite{FAADrone}. A Research Transition Team is in place between the FAA,  National Aeronautics and Space Administration (NASA), and industry to coordinate the UAS Traffic Management (UTM) initiative to enable safe visual and BVLOS UAV flights in low-altitude airspace of under $400$ feet above the ground level. Efforts are underway in Europe~\cite{5Gdrones} with 5G experimentation to make safe BVLOS UAV flights possible. Indeed, existing cellular networks can be a strong candidate for operating autonomous UAVs in BVLOS scenarios, in which a UAV acts as an aerial user equipment (UE) and can maintain reliable communication for safety and control purposes with the ground base stations (GBSs)~\cite{geraci_2018}. 

For maintaining reliable and seamless connection quality at the cellular-connected UAVs, effective mobility management by minimizing handover failures, radio link failures (RLFs), as well as unnecessary handovers is critically important. However, due to being served by sidelobes of the GBS that provide lower antenna gains, a UAV might be connected with a GBS located far from it~\cite{geraci_2018,ramy_antenna}. This phenomenon, in turn, makes the reference signal received power (RSRP) based mobility management of cellular-connected UAVs extremely challenging. Moreover, the network operators are unlikely to sacrifice the performance of the ground UE performance for supporting aerial UEs.

Speed estimation of a ground/aerial UE can play an important role in effective mobility management. This information, in turn, can help efficient resource scheduling, load balancing, and energy efficiency enhancements~\cite{arvind_ho}. Especially, due to the patchy signal coverage of GBS in the sky, a high UAV speed indicates that the UAV in interest will be associated with a GBS for a brief amount of time. These challenges motivate the need for UAV or cell-specific handover parameter optimization based on the effective UAV speed estimation. Moreover, UAVs flying at high altitudes suffer from high interference stemming from nearby GBS due to the near free-space path-loss trend in the GBS-to-UAV link~\cite{lin_sky}. Estimating the mobile UAV speed will enable the GBSs to coordinate among themselves for leveraging the inter-cell interference coordination scheme as done for ground UEs in~\cite{david2012}. While the global positioning system (GPS) can be used to estimate speed, GPS receivers consume a significant amount of power~\cite{arvind_ho}. Furthermore, GPS coverage in dense urban canyons may be unreliable. In such a case, the number of HOs made by the UAV within a certain measurement window due to the patchy coverage can be utilized to estimate its speed~\cite{3gpp.36.331}. The estimated speed can be also used to determine the mobility state (low, medium, or high) of the flying UAV. Such kind of mobility state detection (MSD) analysis will help to integrate UAVs reliably into the existing cellular networks. For instance, a UAV flying at high speed will need faster handover processing to ensure that outage does not happen due to rapidly changing coverage patterns. Thus the handover parameters for low and high UAV velocities can also be tuned properly to reduce handovers or minimize RLFs~\cite{turkka_moblility}. Hence, by obtaining the UAV speed efficiently, it is possible to avoid outages and improve mobility performance significantly. 
Note that we did not consider any handover improvement technique in this paper, e.g. approaches similar to those discussed in~\cite{lin_field, denmark_uav_test,moin_ICC,lin_mobility}. If HO enhancements for UAVs are deployed, our proposed speed and MSD approaches can still be used after certain modifications and parameter optimization.

\begin{table*}[t]
\centering
\caption{{Literature review on UAV Speed Estimation.}} 
\scalebox{0.989}{
\begin{tabular}{p{0.4cm} p{3.8cm} p{2.8cm} p{1.2cm} p{1.0cm} p{1.2cm} p{1.0cm} p{1.2cm} p{1.2cm}} \hline
{\textbf{Ref.}} & {\textbf{Goal}} & {\textbf{Antenna pattern}} & {\textbf{Ground reflection}} & {\textbf{Speed estimation}} &{\textbf{MSD}} & {\textbf{Aerial UE}}  & {\textbf{Impact of HO parameters}} & {\textbf{Impact of UAV height}}\\ \hline

\cite{Ramy_walid_ICC2020} & Analyzing cellular-connected UAVs considering $3$D antenna radiation & directional, array & \xmark & \xmark & \xmark & \cmark & \xmark & \cmark\\ \hline

\cite{ramy_coverage} & Providing reliable connectivity and mobility support for UAVs & directional, array   & \xmark & \xmark & \xmark & \cmark & \xmark & \cmark\\ \hline

\cite{galkin2020} & Intelligent GBS association for UAVs based on network information  & directional, array   & \xmark & \xmark & \xmark & \cmark & \xmark & \cmark\\ \hline


\cite{mahdiAari_RL2020} & Reducing disconnectivity time, handover rate, and energy consumption of UAV & directional, array & \xmark & \xmark & \xmark & \cmark & \xmark & \xmark\\ \hline

\cite{moin_ICC} & Serving both ground users
and UAVs simultaneously in a co-channel sub-6 GHz network & directional, array   & \xmark & \xmark & \xmark & \cmark  & \xmark & \xmark\\ \hline

\cite{xingqin_rl_2019} & Ensuring robust wireless connectivity and mobility support for UAVs & directional, array & \xmark & \xmark & \xmark  & \cmark & \xmark & \xmark\\ \hline

\cite{lin2020a2g} & Maximizing aircraft user throughput by tuning tilting angles and inter GBS distance  & directional, array & \cmark & \xmark & \xmark & \cmark & \xmark & \xmark\\ \hline

\cite{simran_GR} & Serving both ground users
and UAVs simultaneously in a co-channel mmWave network & directional, single   & \xmark & \xmark & \xmark & \cmark & \xmark & \cmark\\ \hline
\cite{arvind_ho} & Estimating ground user speed, CRLB, and mobility state  & omnidirectional, single   & \xmark & \cmark & \cmark & \xmark  & \xmark & \xmark\\ \hline
\cite{chowdhurySPAWC2020} & Estimating UAV speed and CRLB & directional, array   & \xmark & \cmark & \xmark & \cmark & \xmark & \xmark\\ \hline
\textbf{This work} & Estimating UAV speed, CRLB, and mobility state  & directional, array   & \cmark & \cmark & \cmark & \cmark & \cmark & \cmark\\ \hline
\end{tabular}}
\label{tab:lit_review}
\end{table*}

In our previous work~\cite{chowdhurySPAWC2020}, we provided preliminary results considering the impact of antenna radiation to obtain the handover count (HOC) PMFs while we did not tackle the problem of MSD based on the estimated speed. Moreover, the speed estimation analysis was done for a fixed UAV height, antenna configuration, and handover related parameters. In contrast, in this paper, we approximate the HOC PMF using Gaussian distribution while considering both antenna patterns and ground reflection which provides a better representation for the RSRP values stemming from the down-tilted GBS antennas~\cite{chowdhury2021ensuring}. We also report the HOC PMF for different UAV heights, antenna element numbers, and handover parameters such as measurement gap and time-to-trigger (TTT). Afterward, we propose a speed estimator based on the HOC and then obtain the MSD based on this proposed estimator. Our key contributions can be summarized as follows.
\begin{itemize}[leftmargin=*]
    \item We first introduce a novel and efficient HOC based UAV speed estimation technique while considering realistic GBS antenna radiation pattern~\cite{chowdhury2021ensuring}, handover scenario~\cite{3gpp.36.331}, and ground reflection~\cite{chowdhury2021ensuring}. Due to the complex antenna configurations and ground reflection-based path loss model, it is difficult to obtain closed-form expression of HOC PMF. Hence, through extensive MATLAB simulations, we obtain the HOCs for different UAV heights, antenna element numbers, measurement gap values, and TTT values. 
    
    \item Next, we consider different distributions to approximate the HOC PMFs and show that the Gaussian distribution shows the best match. Then, by using the MATLAB \textit{curve fitting toolbox}, we express the Gaussian parameters with respect to different UAV speeds, GBS densities, handover parameters, UAV heights, and antenna configurations. Our results show that the HOCs trends have low variability with respect to GBS densities and UAV makes fewer handovers for high UAV speed and large TTT.

    \item Using the approximated HOC PMF, an expression for the Cramer-Rao lower bound (CRLB) of the estimated speed is derived. Moreover, a minimum variance unbiased (MVU) speed estimator analysis is provided, and we also propose a simple biased estimator. We show that for low TTT values, this biased estimator becomes an MVU estimator. For higher GBS densities, the variances of both the estimator matches closely with the CRLB, and we investigate the accuracy of this estimator for various UAV velocities, GBS densities, HOC measurement time intervals, TTTs, and UAV heights.
    
    \item The estimated speed is used to predict the mobility state (low/medium/high) of the UAV. We also derive the expressions of the probability of detection and probability of false alarm for the mobility states. 
    
    \item Finally, we study the proposed speed estimator and MSD techniques for a variable UAV speed and different HOC measurement time intervals and handover parameters.  
    
\end{itemize}

\begin{figure}[t]
\centering{\includegraphics[width=0.75\linewidth]{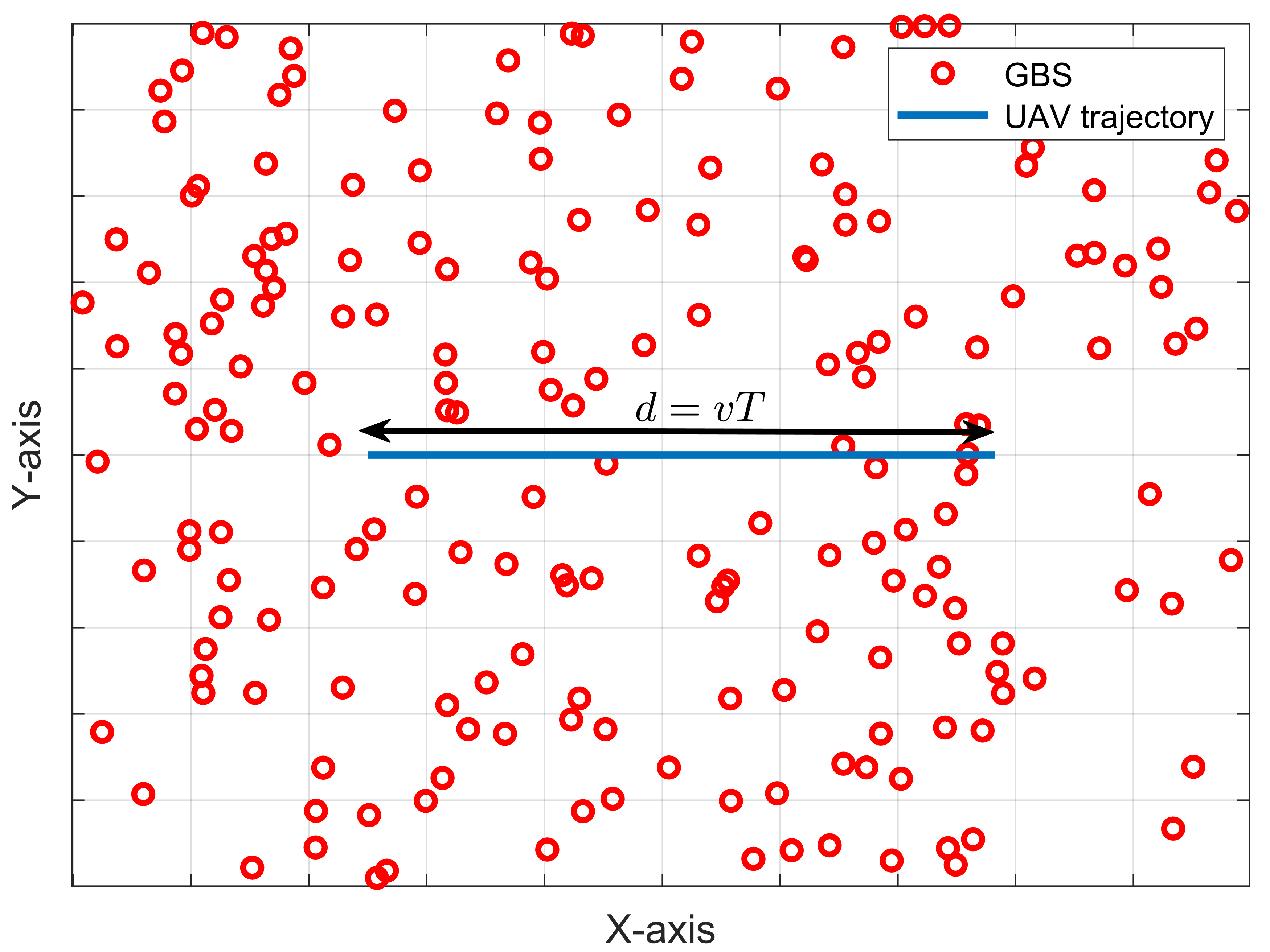}}
    \caption{Illustration of cellular network with linear mobility model.}
    \label{fig:uav_linear_trajec}
\end{figure} 

The rest of this paper is organized as follows. In Section~\ref{sec:lit_review}, we provide a literature review related to UAV speed estimation. Section~\ref{sec:sys} describes the system model for handover PMF calculation. The approximation of the HOC PMF using the Gaussian distribution is presented in Section~\ref{sec:stat}. We derive the CRLB for UAV speed estimation and provide a biased speed estimator in Section~\ref{sec:crlb}. In Section~\ref{sec:MSD analysis}, we present the expression of the mobility state (low, medium, or high) probabilities, the probability of detection, and the probability of false alarm. Simulation results are presented in Section \ref{sec:simulation}. Finally, Section~\ref{sec:Conc} concludes this paper. 
\section{Related works}
\label{sec:lit_review}
Existing cellular networks can estimate the mobility state of a UE into three classes: low, medium, and high mobility~\cite{david2012}. There are several works in the literature regarding the speed or mobility state estimation of ground UE. For instance, in \cite{arvind_ho}, the authors presented approximate probability mass functions (PMFs) for HOC, and based on them proposed an efficient estimator for ground UE speed. Using tools from stochastic geometry, analytical studies for handover rate in typical cellular networks are conducted in~\cite{lin_mobility}, while authors in~\cite{bao_mobility} also considered the presence of small base stations along with GBSs. Based on simple geometry, analytical analysis of handover failure performance in a dense heterogeneous network in the presence of a Rayleigh fading channel is studied in \cite{karthik2017}. The authors in~\cite{HO_oaram_opt} introduces a novel handover performance optimization algorithm by tuning the values of the handover related parameters in an automated manner. Their algorithm provides different weights to different handover performance metrics. A reinforcement learning (RL) based offline handover optimization scheme for ground UE in a 5G cellular network was proposed in \cite{5g_rl}. However, none of these prior works took the mobility management of aerial UE or UAVs into account.\looseness=-1 

In~\cite{amin_HO}, authors proposed a machine learning-based handover and resource management algorithm for aerial UE while considering the interference and HOCs of ground UE in the uplink (UL) scenario. Authors in \cite{challita_2019}, study the interference-aware optimal path planning of cellular-connected UAVs using RL in the UL scenario. But they do not consider the handover or mobility constraints of the UAVs. In~\cite{Ramy_walid_ICC2020}, the authors explored the effects of practical antenna configurations on the mobility management of cellular-connected UAVs. In a recent work, the authors explore the RL algorithm to maximize the received signal quality at a cellular-connected UAV while minimizing the number of handovers~\cite{xingqin_rl_2019}. In their proposed framework, the predefined trajectory is divided into discrete states and at each state, the UAV can decide which GBS to choose to maximize the received signal quality while minimizing the HOC. Motivated by this, the authors in~\cite{moin_ICC} also used RL to provide better connectivity to cellular-connected UAVs by tuning the GBS antenna configurations. In~\cite{ttt_uav}, the authors study the handover failure and ping-pong probabilities of a ground UE connected to UAV base stations while considering the handover parameters. In~\cite{3dbeam_HO}, the authors studied the performance of 3D beamforming for $5$G-connected UAVs assuming perfect beam alignment.

Real-world experiments were also conducted to test the feasibility of integrating UAVs as UE in \cite{lin_sky,lin_field,denmark_uav_test}. For example, the Third Generation Partnership Project (3GPP) also studied the challenges in providing reliable UAV mobility support in~\cite{3gpp}. By studying the performance of a cellular-connected UAV network in terms of RLF and the number of handovers, the authors conclude that the existing cellular networks will be able to support a small number of aerial UEs with good mobility support \cite{xingqin_ho_2018}. However, none of these prior works considered the problem of MSD of cellular-connected UAVs by estimating the speed based on the HOCs. To the best of the authors' knowledge, this is the first attempt to estimate both the speed and mobility state of a flying UAV in a realistic cellular network based on the HOC statistics. We compare our work with the state of the art in the literature in Table~\ref{tab:lit_review}.
\section{System Model}
\label{sec:sys}
\subsection{Network Model}
We consider a cellular network in which a single UAV (acting as an aerial UE), is flying along with a two dimensional (2D) linear trajectory (for instance, through the horizontal X-axis) at a fixed height $h_{\textrm{UAV}}$ and speed $v$. We consider the linear mobility model due to its simplicity and suitability for UAVs flying in the sky with virtually no obstacle e.g., in UAV corridors. The underlying cellular network consists of GBSs that are deployed with homogeneous Poisson point process (HPPP) $\Phi$ of intensity $\lambda_{\text{GBS}}~ \text{GBSs/km}^2$~\cite{Ramy_walid_ICC2020}, and all GBSs have similar height $h_{\textrm{GBS}}$ and transmission power $P_{\textrm{GBS}}$. In Fig.~\ref{fig:uav_linear_trajec}, we provide an illustrative example of the linear UAV mobility model in an area of $10\times10$ $\text{km}^2$ with $\lambda_{\text{GBS}}= 2~\text{GBSs/km}^2$. While flying, we assume that the network can track the number of handovers $H$ made by the UAV during a measurement time window $T$. We denote the distance traveled during this measurement duration as $d=vT$. We present the handover procedure later in this Section.

The GBSs consist of $N_t$ vertically placed cross-polarized directional antennas down-tilted by angle $\phi_{\rm d}$~\cite{Ramy_walid_ICC2020,Moin-3d}. We assume that the UAV is equipped with an omnidirectional antenna and the UAV is capable of mitigating the Doppler effect at its end~\cite{rui1}. We consider the GBS antennas to be omnidirectional in the horizontal plane but they have a variable radiation patterns along the vertical dimension with respect to the elevation angle between the antennas and the UE~\cite{ramy_antenna}. The $N_t$ antennas are equally spaced where adjacent elements are separated by half-wavelength distance. The element power gain (in dB) in the vertical plane at elevation angle $\theta$ with respect to the down-tilted antennas can be specified by~\cite{3gpp.38.901}
\begin{equation}
    G_e(\theta)=G_e^{\textrm{max}}- \text{min}\left\{ 12\left(\frac{\theta}{\theta_{3\mathrm{dB}}} \right)^2, \mathrm{G_m}\right\},
\end{equation}
where $\theta \in [-90^\circ, 90^\circ]$, $\theta_{3\textrm{dB}}$ refers to the $3$ dB beamwidth with a value of $65^\circ$, $G_e^{\textrm{max}}=8$~dBi is the maximum gain of each antenna element, and $\mathrm{G_m}$ is the side-lobe level limit, respectively, with a value $30$ dB~\cite{chowdhurySPAWC2020}. Note that the elevation angle $\theta=0^\circ$ refers to the 
horizon and the $\theta=90^\circ$ represents case when the main beam is facing upward perpendicular to the $xy$-plane~\cite{3gpp.38.901}. The array factor $A_f(\theta)$ of the ULA with $N_t$ elements while considering a down-tilt angle $\phi$ is given by
\begin{equation}
    A_f(\theta)=\frac{1}{\sqrt{N_t}}\frac{\sin\big({\frac{N_t\pi}{2}} (\sin\theta-\sin\phi)\big)}{\sin\big({\frac{\pi}{2}} (\sin\theta-\sin\phi)\big)}.
\end{equation} 
Let us denote $G_f(\theta)\triangleq 10\log_{10}( A_f(\theta))^2 $ as the array power gain in dB scale. Then the overall antenna gain at elevation angle $\theta$ is given by 
\begin{equation}
    G(\theta)=G_e(\theta)+G_f(\theta).
\label{eq:total_antenna_gain_down}
\end{equation}
\subsection{Ground Reflection Channel Model}
We consider a channel model that is characterized by both distance-based path-loss and ground reflection~\cite{chowdhury2021ensuring}. Let the length of the $3$D Cartesian distance from the UAV to a GBS $j$ be $l_j$ and the length of the incident and reflected paths are $r_{1,j}$ and $r_{2,j}$, respectively. According to this model, the received power from GBS $j$ at a UAV at height $h_{\rm UAV}$ can be specified as~\cite{chowdhury2021ensuring}:
\begin{equation}
    P_j=P_{\textrm{GBS}}\bigg[\frac{\lambda}{4\pi}\bigg]^2\bigg| \frac{\hat{G}_j(\theta)}{l_j} + \frac{R(\psi_j)\widetilde{G}_j(h_{\rm UAV})e^{i\Delta \phi_j}}{r_{1,j}+r_{2,j}}\bigg|^{\alpha(h_{\rm UAV})},
\label{eq:rx_power_gr}
\end{equation}
where $\theta$ is the elevation angle with respect to the down-tilted antenna of GBS $j$, $i=\sqrt{-1}$ is the imaginary unit of a complex number, $\lambda$ is the wavelength of the carrier frequency ${\text{f}_c}$, $\hat{G}_j(\theta)$ and  $\widetilde{G}_j(h_{\rm UAV})$ represent the height-dependent antenna gain of the direct and reflected path, respectively, $R(\psi_j)$ is the ground reflection coefficient for the angle of reflection $\psi_j$ with respect to the ground plane, $\Delta \phi_j=(r_{1,j}+r_{2,j})-l_j$ is the phase difference between the reflected and the direct signal paths, and $\alpha(h_{\rm UAV})$ is the height dependent propagation coefficient for UAV height $h_{\rm UAV}$. It is worth noting that our analysis can also be extended for 3GPP specified path loss models for UAVs~\cite{3gpp} as done in our previous work in~\cite{chowdhurySPAWC2020}.
\begin{figure}[t]
\centering{\includegraphics[width=0.75\linewidth]{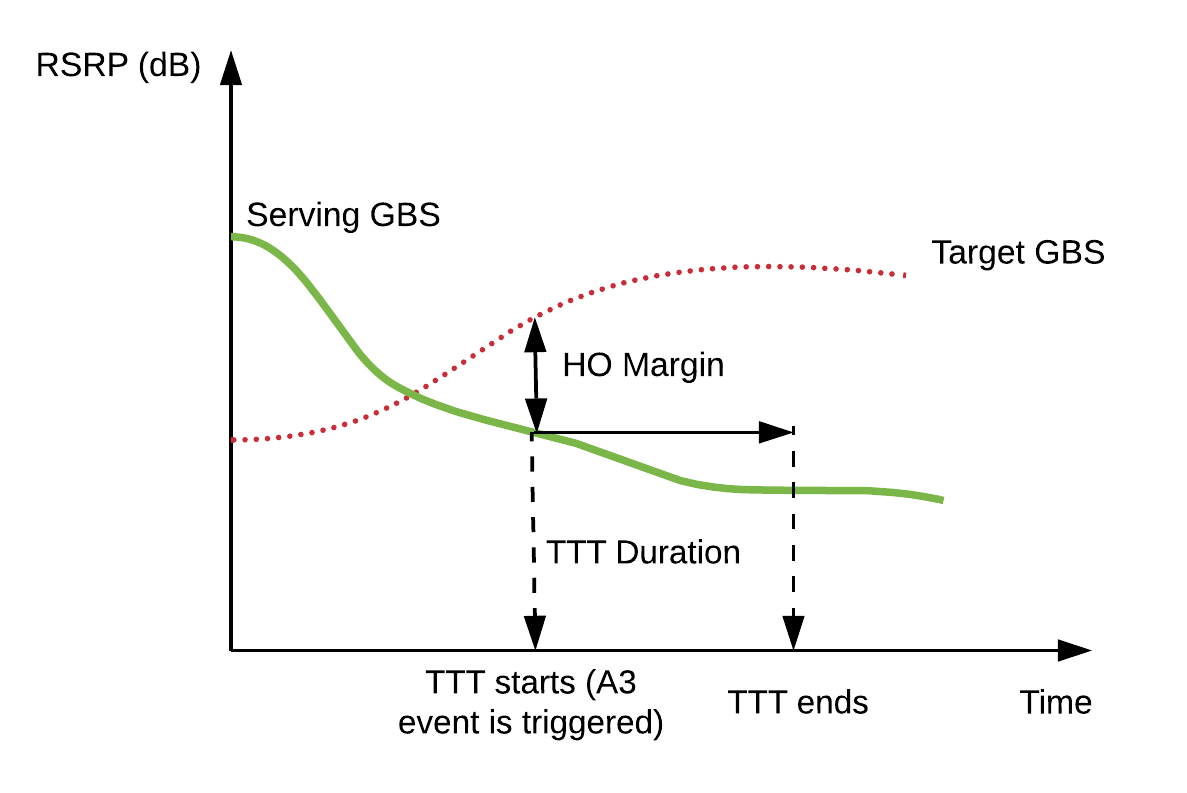}}
    \caption{An illustration of handover procedure which can get initiated at the end of the TTT duration by the serving GBS.}
    \label{fig:HO_procedure}
\end{figure}

Note that the ground reflection coefficient for cross-polarized antennas can be calculated as $R(\psi_j)=\frac{R_{\rm H}(\psi_j)-R_{\rm V}(\psi_j)}{2}$~\cite{najibi2013physical,wietfeld_ground_reflection},~which depends on the reflection coefficients for horizontal linear polarization $R_{\rm H}(\psi_j)$ and vertical linear polarization $R_{\rm V}(\psi_j)$. Moreover, $\hat{G}_j(\theta)$ depends on the instantaneous elevation angle between the GBS and the UAV by~\eqref{eq:total_antenna_gain_down}, whereas $\widetilde{G}_j(h_{\rm UAV})$ can be expressed as:
\begin{equation}
\label{eq:ground_reflected}
  \resizebox{\hsize}{!}{$\widetilde{G}_j(h_{\rm UAV}) = \left \{
  \begin{aligned}
    &\hat{G}_j(\psi_j), && h_{\rm UAV}<h_t \\
    &\frac{\hat{G}_j(\psi_j)}{2}, && h_t \leq h_{\rm UAV}\leq 2h_t\\
    &\frac{\hat{G}_j(\psi_j)}{2}-\frac{h_{\rm UAV}}{2 h_{t,c}}\cdot(\hat{G}_j(\psi_j)-1), && 2h_t \leq h_{\rm UAV}\leq 500 \\
    &0.5, && h_{\rm UAV}\geq 500 
  \end{aligned} \right.$}
\end{equation} 
where $h_t=2h_{\textrm{GBS}}+2$ and $h_{t,c}=500$~m are threshold heights~\cite{wietfeld_ground_reflection}, and $\hat{G}_j(\psi_j)$ is the antenna gain of the incident path on the ground from the down-tilted antennas which depends on $N_t$. Finally, the height-dependent propagation coefficient can be expressed as:
\begin{equation}
\label{eq:ground_refle}
  \alpha(h_{\rm UAV})= \left \{
  \begin{aligned}
    &\alpha_0-h_{\rm UAV}\cdot \bigg(\frac{\alpha_0-2}{h_{\textrm{GBS}}}\bigg) &&  h_{\rm UAV}<2\cdot h_{\textrm{GBS}}, \\
    &2 && h_{\rm UAV}\geq 2\cdot h_{\textrm{GBS}},
  \end{aligned} \right.
\end{equation} 
where $\alpha_0=3.5$ is the maximum possible attenuation coefficient~\cite{wietfeld_ground_reflection}. 
\subsection{Handover Procedure}
In a traditional cellular network, a UAV will measure the RSRPs from all the adjacent GBSs at subsequent measurement gaps using \eqref{eq:rx_power_gr}. Here, we consider a handover mechanism that involves a handover margin (HOM) parameter, and a TTT parameter, which is a time window that starts after meeting the following handover condition (A3 event \cite{3gpp.36.331}):
\begin{equation}
\text{RSRP}_{{j}}>\text{RSRP}_{{i}}+m_{\text{hyst}},
\label{a3_event}
\end{equation}
where $\text{RSRP}_{{j}}$ and $ \text{RSRP}_{{i}}$ are the RSRPs (in dB) measured from the serving GBS $i$ and target GBS $j$, respectively, and $m_{\text{hyst}}$ is the HOM set by the network operator. Throughout the flight duration, the UAV can measure the RSRPs from all the adjacent GBSs at subsequent measurement gaps. The UAV does not transmit its measurement report to its current serving GBS before the TTT expires~\cite{karthik2017}. An illustration of the handover mechanism is depicted in Fig.~\ref{fig:HO_procedure}. When the condition in \eqref{a3_event} is satisfied for the first time, the UAV waits for a duration of TTT, before sending a measurement report to its serving GBS to initiate the actual handover.

Note that the use of TTT and HOM are critical to ensure that successive handovers among neighboring GBSs (ping pong event) due to fluctuations in the link qualities from different GBSs are minimized. If the handover event entry condition is still satisfied after TTT, the UAV sends a measurement report to its associated GBS, which then communicates with the target GBS to perform the handover. Hereinafter, we will use $t_{\rm TTT}$ and $t_{\rm MG}$ to represent TTT and measurement gap, respectively.

\begin{figure}[t]
\centering
	\subfloat[]{
			\includegraphics[width=.48\linewidth]{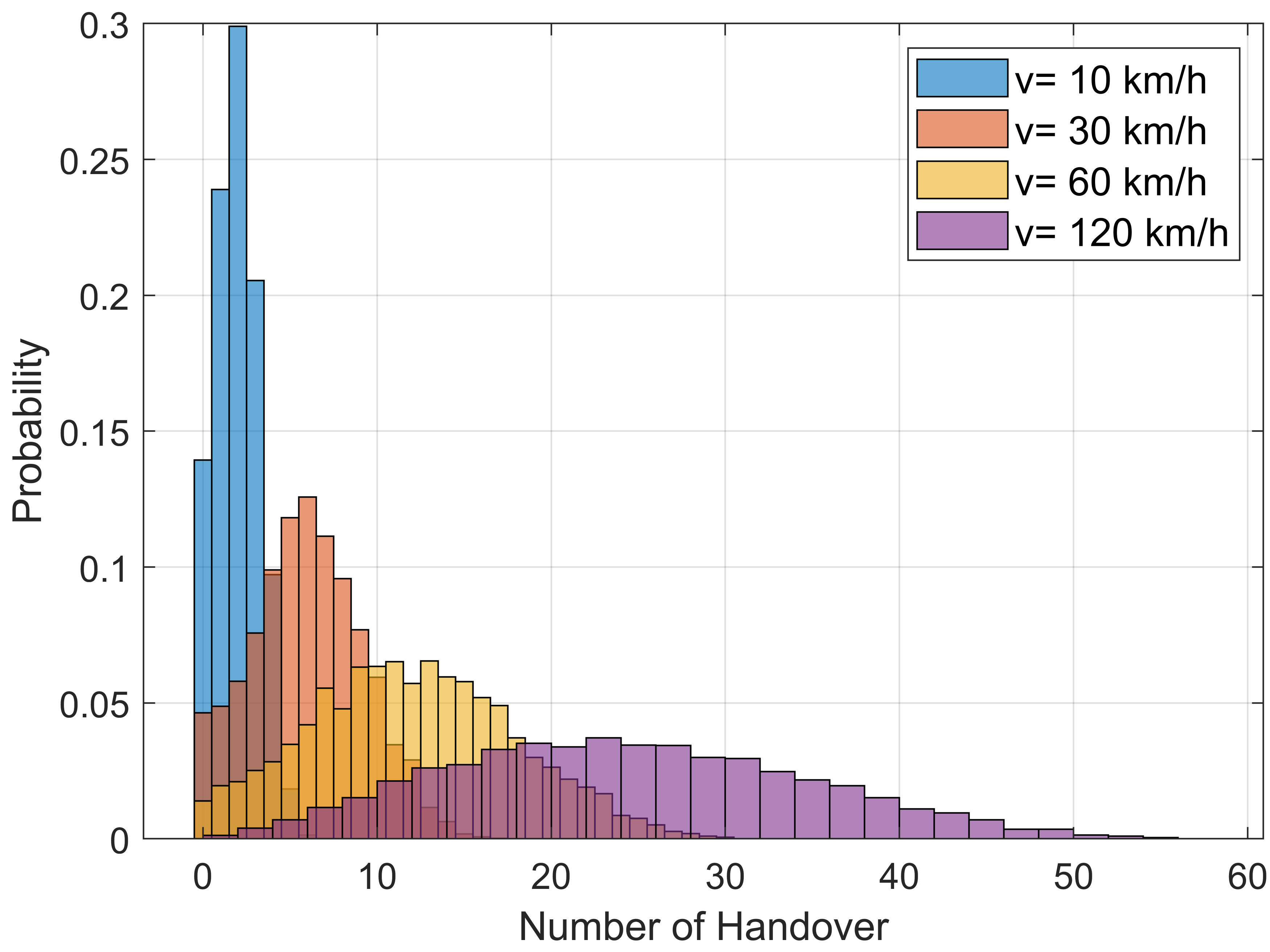}}
		\subfloat[]{
			\includegraphics[width=.48\linewidth]{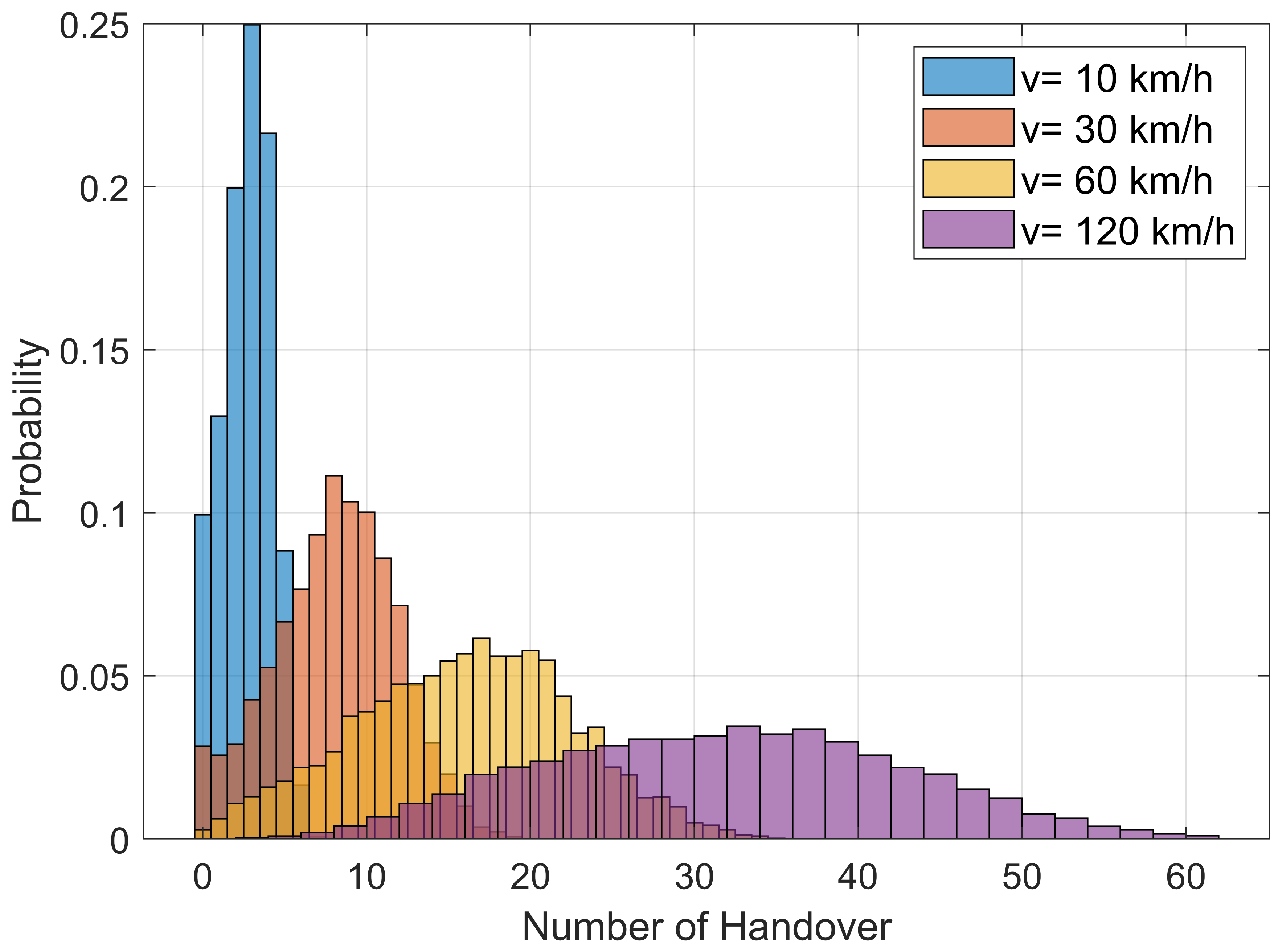}} 
			
    \caption{{PMF of the HOC for different $\lambda_{\text{GBS}}$ and $v$ considering ground reflection, $t_{\rm MG}=$$40$~ms, and $t_{\rm TTT}=$$0$~ms; (a) $\lambda_{\text{GBS}}= 1~\text{GBSs/km}^2$; (b) $\lambda_{\text{GBS}}= 3~\text{GBSs/km}^2$.}}
    \label{fig:uav_velocity_PMF_empritical}
\end{figure}

\begin{figure}[t]
\centering
	\subfloat[]{
			\includegraphics[width=.48\linewidth]{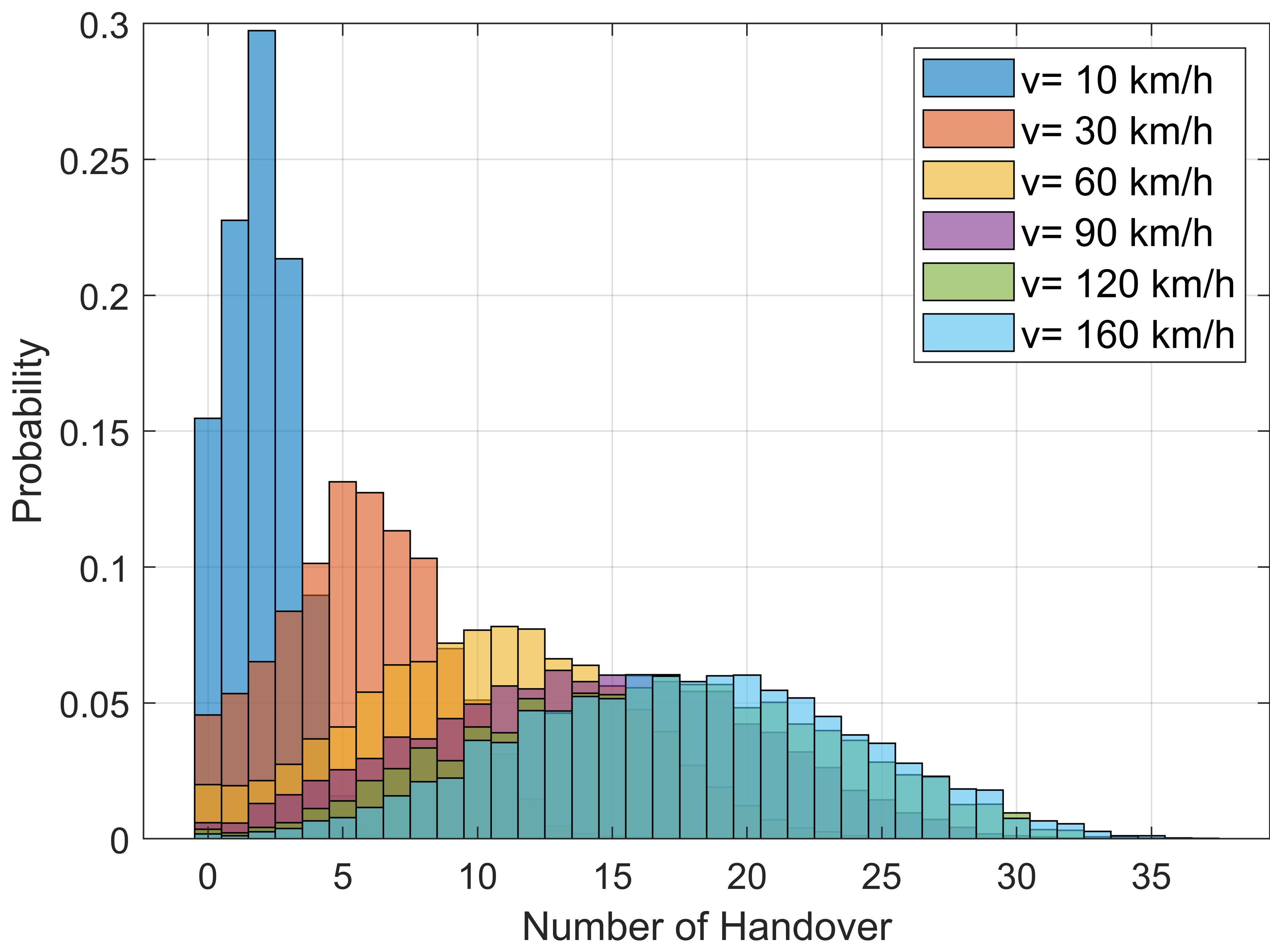}}
		\subfloat[]{
			\includegraphics[width=.48\linewidth]{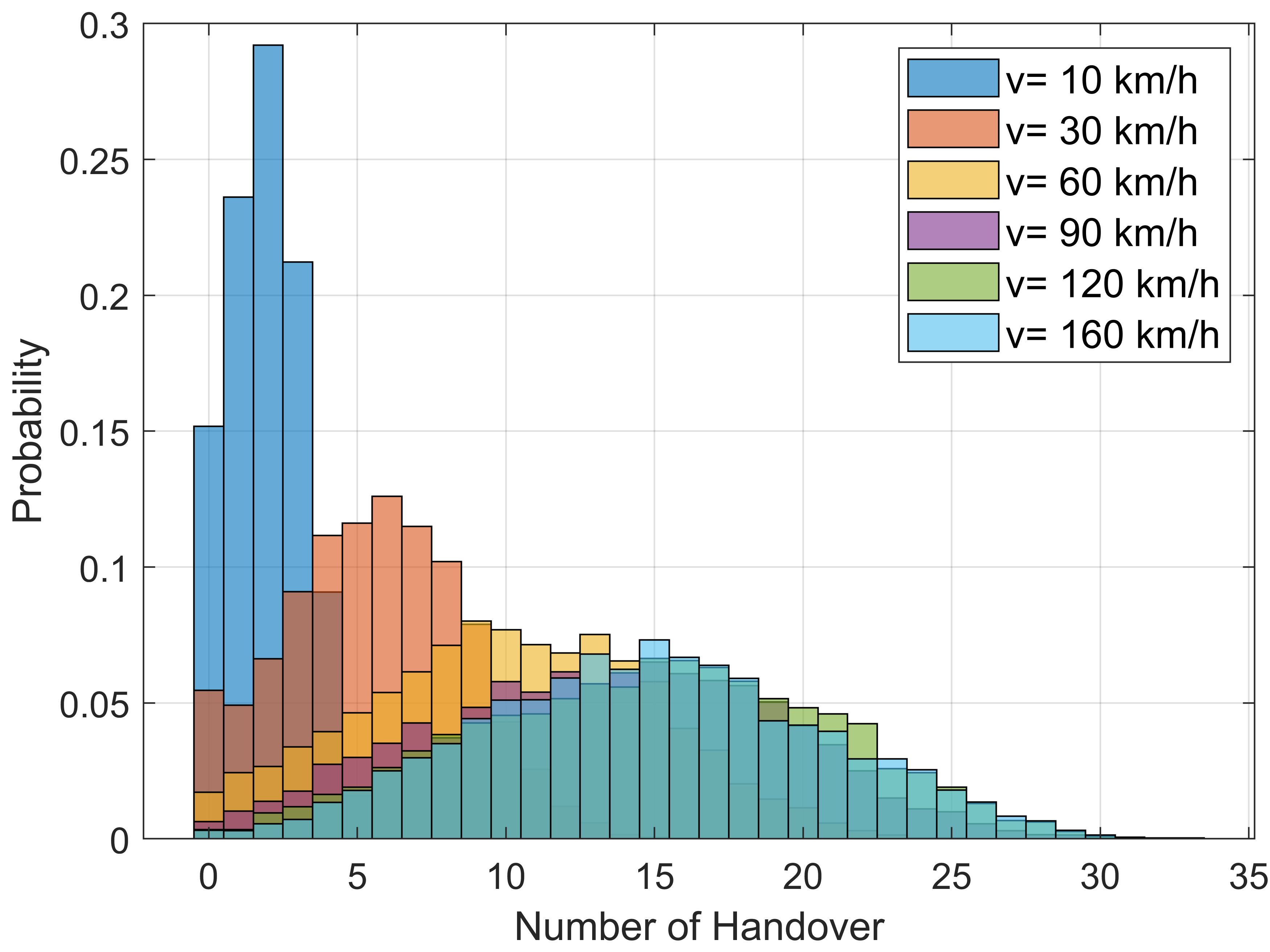}} 
			
    \caption{{PMF of the HOC for different $\lambda_{\text{GBS}}$ and $v$ considering ground reflection and $t_{\rm MG}=160$~ms; (a) $t_{\rm MG}=40$~ms; (b)  $t_{\rm MG}=100$~ms.}}
    \label{fig:velocity_fitting_160ms}
\end{figure}

\section{Approximation of the Handover-Count Statistics using Gaussian Distribution }
\label{sec:stat}

In this section, we introduce an approximation for the PMF of handover count using Monte Carlo simulations. The parameters of the Gaussian distribution will be approximated based on $\lambda_{\text{GBS}}$ and UAV speed $v$ using the curve fitting tools
in MATLAB. 

\subsection{Approximation of the PMF of Handover Count Using Gaussian Distribution}

For obtaining the estimated speed of a UAV based on its HOC, we need to know the HOC PMF $f_H(h)$. To the best of our knowledge, there exists no closed-form expression for the PMF of HOC of a cellular-connected UAV. Moreover, due to the intractability of the GBS antenna radiation pattern, handover process, and ground reflection based channel model, it is extremely difficult to obtain an exact expression of $f_H(h)$.

In Fig.~\ref{fig:uav_velocity_PMF_empritical}, we plot the HOC PMFs from extensive MATLAB simulations for various $v$ and $\lambda_{\text{GBS}}$ values. For each combination of $v$ and $\lambda_{\text{GBS}}$, we obtained samples of HOC $H$ for constructing the PMF $f_H(h)$. Here, we consider the HOC measurement time interval $T=12$~s~\cite{arvind_ho}. For low values of $\lambda_{\text{GBS}}$, as depicted in Fig.~\ref{fig:uav_velocity_PMF_empritical}(a), the PMFs for different UAV velocities overlap with each other significantly. For higher values of $\lambda_{\text{GBS}}$, the PMFs still overlap, but they are slightly more spread out which will lead to more accurate speed estimation. One interesting observation is that the overall trends of the PMFs do not change significantly with higher GBS densities. This happens because the coverage in the sky is  \emph{fragmented} due to the weak sidelobes of the GBS antennas while serving UAVs and the ground reflection effect from the down-tilted main lobes, see e.g. the aerial cellular coverage results presented in~\cite{lin_field,chowdhury2021ensuring}. The nature of the fragmented coverage in the sky
has a low correlation with the number of GBSs on the ground. In other words, the fragmented coverage pattern will still be present even for a lower GBS density due to the antenna sidelobe and the ground reflection effects.

We also study the impact of the TTT value of $160$~ms on the HOC PMFs in Fig.~\ref{fig:velocity_fitting_160ms}. Interestingly, for $t_{\rm TTT}$~$=160$~ms, the PMFs of the higher values of UAV speed such as $120$~km/h and $160$~km/h, the corresponding PMFs overlap significantly with $v=90$~km/h. This is because the higher speed allows the UAV to travel over small coverage areas without making any handover. According to~\cite{karthik2017}, the handover failure probability for high-speed UE decreases with decreasing $t_{\rm MG}$. Hence, unless otherwise stated, we consider $t_{\rm MG}$ to be $40$~ms in this work. 

To obtain the best distribution fitting for the HOC PMF, we consider different distributions and plot the respective results for two UAV velocities in Fig.~\ref{fig:velocity_fitting_distributions}. From the obtained HOC data samples, we can conclude that the PMFs for different $v$ resemble closely with the Gaussian distribution, especially for higher speeds. The probability density function (PDF) of Gaussian distribution can be expressed as a function of mean parameter $\mu$ and variance parameter $\sigma^2$ as:
\begin{equation}
f^{(n)}(x)= \frac{1}{\sqrt{2\pi\sigma^2}}e^{-\frac{(x-\mu)^2}{2\sigma^2}}.   
\label{eq:normal}
\end{equation}

\begin{figure}[t]
\centering
	\subfloat[$v=10$ km/h]{
			\includegraphics[width=.75\linewidth]{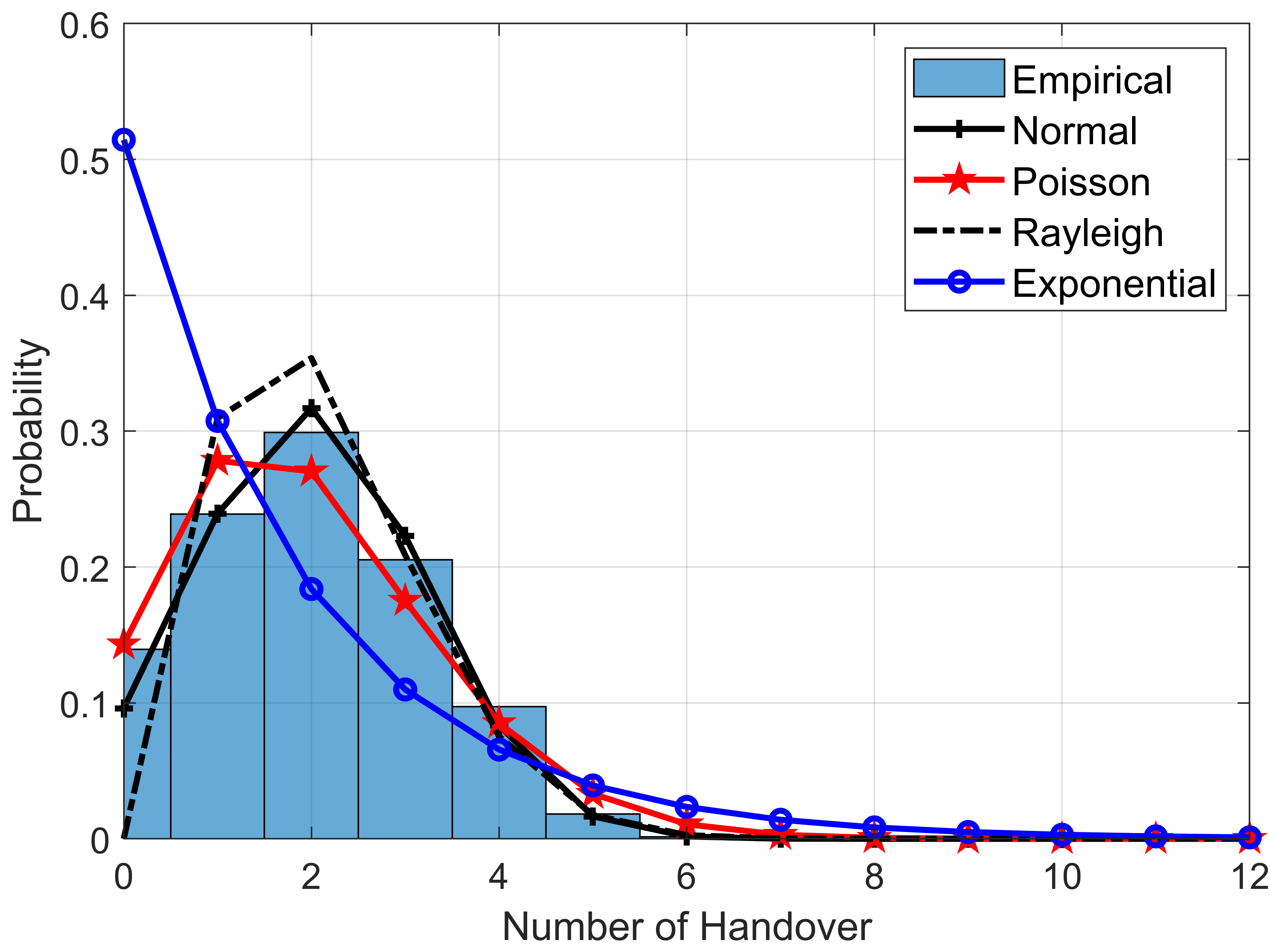}}\\
		\subfloat[$v=120$ km/h]{
			\includegraphics[width=.75\linewidth]{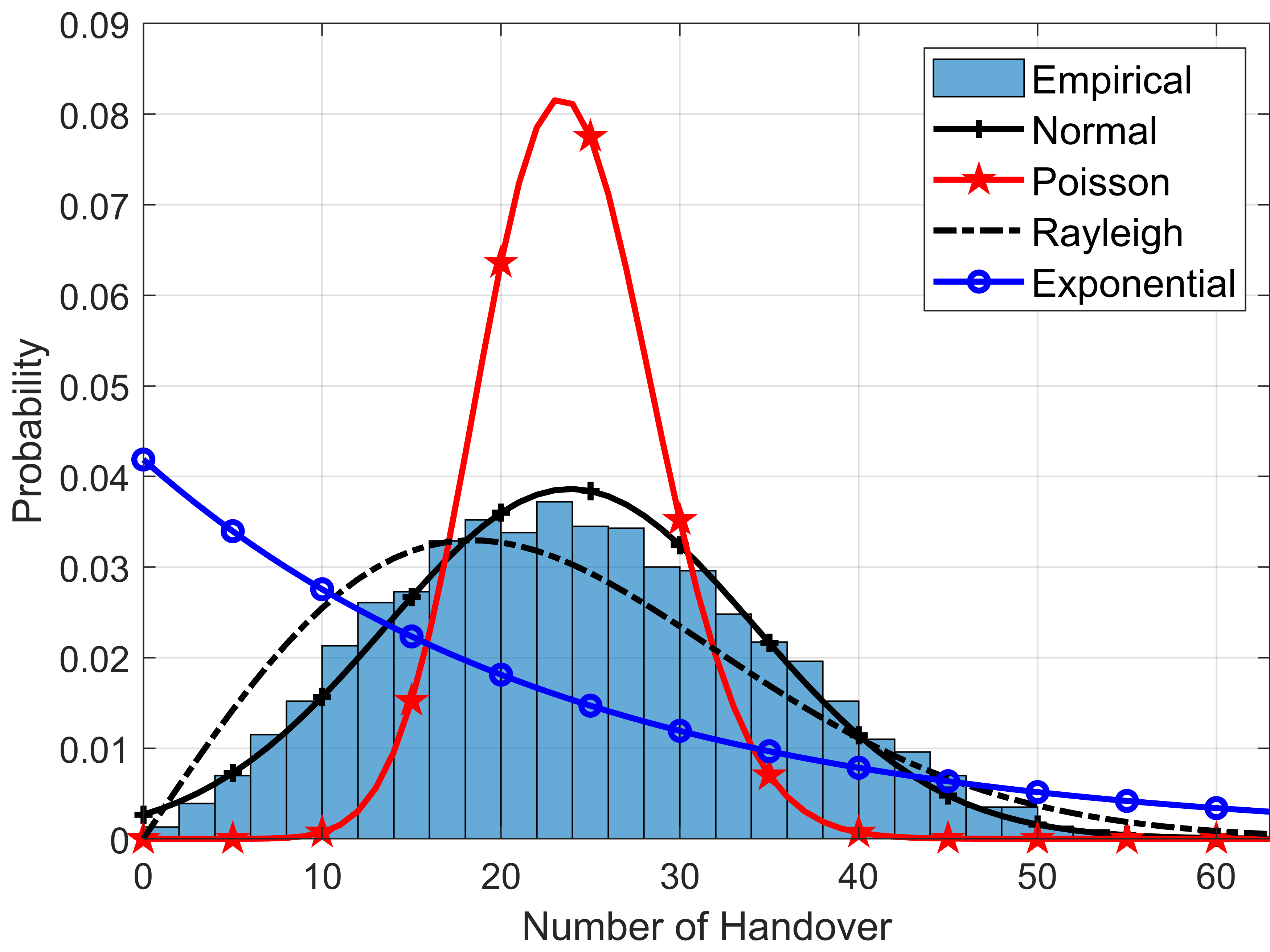}} 
    \caption{{PMF fitting of HOC for different distributions with  $\lambda_{\text{GBS}}=1$ and UAV speed $v$.}}
    \label{fig:velocity_fitting_distributions}
\end{figure}

Since the PMF $f_H(h)$ is discrete whereas the Gaussian distribution in \eqref{eq:normal} is continuous, we consider only non-negative integer samples of the Gaussian distribution for the fitting process~\cite{arvind_ho}. Hence, we approximate the PMF of HOC of a UAV flying with a constant speed with mean parameter $\mu$ and variance parameter $\sigma^2$ as,
\begin{equation}
f^{(n)}_H(h)= \frac{1}{\sqrt{2\pi\sigma^2}}e^{-\frac{(h-\mu)^2}{2\sigma^2}},~\text{for}~h \in \{0,1,2,...\}.   
\label{eq:PMF_normal}
\end{equation}
The values of the parameters $\mu$ and $\sigma^2$ should be chosen such that the mean square error (MSE) between $f_H(h)$ and $f_H^{(n)}(h)$ is minimized. For obtaining analytical expression, we express  $\mu$ and $\sigma^2$ as functions of distance $d=vT$ and $\lambda_{\text{GBS}}$. 
In what follows, for a given UAV height, antenna orientation, and down-tilt angle, we derive the heuristic closed-form expressions for the Gaussian parameters as functions of GBS density and UAV speed.
\begin{prop}
The $\mu$ and $\sigma^2$ parameters for the approximation in \eqref{eq:PMF_normal} that minimizes the MSE between $f_H(h)$ and $f_H^{(n)}(h)$ can be expressed as
\begin{equation}
\begin{aligned}
\mu&=a_1\times\lambda_{\text{GBS}}^{b_1} \times d^{c_1},\\
\sigma^2&=a_2\times\lambda_{\text{GBS}}^{b_2}\times d^{c_2},
\label{eq:PMF_param_var_normal}
\end{aligned}
\end{equation}
where $d=vT$ represents the distance traveled by the UAV during HOC measurement time $T$. The value of the three parameters involved for each of \emph{$\mu$} and \emph{$\sigma^2$} depends on the $t_{\rm MG}$, $t_{\rm TTT}$, $h_{\rm UAV}$, and antenna configurations.
\end{prop}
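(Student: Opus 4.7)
The plan is not to derive the claim analytically: the ground-reflection channel in \eqref{eq:rx_power_gr}, the down-tilted array pattern in \eqref{eq:total_antenna_gain_down}, and the sequential TTT-based handover rule in \eqref{a3_event} together render a closed-form expression for $f_H(h)$ inaccessible, as already noted in the text. Instead, the statement will be established as a heuristic approximation through a curve-fitting procedure applied to Monte Carlo HOC samples, with the power-law form justified by physical scaling arguments and verified against the simulated moments.

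First, I would fix the configuration parameters $(h_{\rm UAV}, N_t, \phi_{\rm d}, t_{\rm MG}, t_{\rm TTT})$ and sweep a grid of $(\lambda_{\text{GBS}}, v)$ values. For each grid point I would draw many realizations of the GBS HPPP, simulate the UAV's straight-line flight over duration $T$, count handovers using the A3 rule in \eqref{a3_event}, and record the sample mean $\hat{\mu}$ and sample variance $\hat{\sigma}^2$ of $H$. This gives empirical surfaces $\hat{\mu}(\lambda_{\text{GBS}}, d)$ and $\hat{\sigma}^2(\lambda_{\text{GBS}}, d)$ with $d=vT$ on which the minimization of the MSE between $f_H(h)$ and $f_H^{(n)}(h)$ reduces to matching these first two moments.

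Second, I would motivate the separable power-law ansatz by taking logs of \eqref{eq:PMF_param_var_normal}, which yields $\log \mu = \log a_1 + b_1 \log \lambda_{\text{GBS}} + c_1 \log d$, and predicts that $\log \hat{\mu}$ is linear in $\log \lambda_{\text{GBS}}$ at fixed $d$ and in $\log d$ at fixed $\lambda_{\text{GBS}}$, with slopes that do not interact. I would verify this by inspecting log–log slices along each axis; heuristically one expects $c_1 \approx 1$ since the expected number of cell-boundary crossings scales linearly with distance travelled, and $0 < b_1 < 1$ because adding GBSs increases the crossing rate but with diminishing returns once the fragmented sky coverage caused by the sidelobes and ground reflection dominates (precisely the behaviour already observed in Fig.~\ref{fig:uav_velocity_PMF_empritical}). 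An analogous argument motivates the same multiplicative form for $\sigma^2$. The coefficients $(a_1,b_1,c_1)$ and $(a_2,b_2,c_2)$ are then pinned down by nonlinear least-squares fitting via the MATLAB curve fitting toolbox, and goodness-of-fit statistics (for instance $R^2$ and residual norms) are reported to certify that the approximation is tight over the operating range of interest. Repeating the whole procedure for each admissible choice of $(t_{\rm MG}, t_{\rm TTT}, h_{\rm UAV}, N_t, \phi_{\rm d})$ establishes the configuration-dependence asserted in the last sentence of the lemma.

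The hardest part will be defending the ansatz beyond the in-sample fit quality: the sky coverage is fragmented in a way that has only a weak dependence on $\lambda_{\text{GBS}}$, as emphasized in the discussion surrounding the figures, so the power law could plausibly break down at very small densities where handovers become too rare to support a Gaussian continuum approximation, or at very large $t_{\rm TTT}$ where the PMFs for several different $v$ collapse onto one another (cf.\ Fig.~\ref{fig:velocity_fitting_160ms}) and the $c_1$-slope in $v$ becomes ill-conditioned. I would therefore explicitly delimit the domain of validity to the range of $(\lambda_{\text{GBS}}, v, T)$ used in the subsequent CRLB and MSD analyses, and report residual diagnostics within that domain as the primary evidence that \eqref{eq:PMF_param_var_normal} is a faithful parametrization.
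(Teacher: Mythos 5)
Your proposal matches the paper's own justification essentially exactly: the lemma is not proved analytically but established empirically by Monte Carlo simulation of the HOC PMFs over a grid of $(\lambda_{\text{GBS}}, v)$, followed by 2D power-law fits of $\mu$ and $\sigma^2$ using the MATLAB curve fitting toolbox (after trying several candidate functional forms), with goodness-of-fit reported via RMSE and adjusted R-square, and the whole procedure repeated for each $(t_{\rm MG}, t_{\rm TTT}, h_{\rm UAV}, N_t)$ configuration to populate the parameter tables. Your added physical scaling arguments for $c_1\approx 1$ and $0<b_1<1$, and the explicit delimitation of the domain of validity, go slightly beyond what the paper states but are consistent with its observations.
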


\begin{table}[t]
\begin{center}
\caption{Gaussian parameters for different $t_{\rm TTT}$ and $t_{\rm MG}$. }
\label{tab:mu_sigma_MG_TTT}
\subfloat[$\mu$ for different $t_{\rm TTT}$ and $t_{\rm MG}$]
{
\scalebox{0.75}
{
\begin{tabular}{cc|ccc}
\hline
\multicolumn{2}{c|}{\multirow{2}{*}{($a_1$,$b_1$,$c_1$)}} & \multicolumn{3}{c}{\textbf{$t_{\rm TTT}$ (ms)}}                                                                        \\ \cline{3-5} 
\multicolumn{2}{c|}{}                                     & \multicolumn{1}{c|}{0}                 & \multicolumn{1}{c|}{40}             & 160                   \\ \hline \hline
\multicolumn{1}{c|}{\multirow{2}{*}{\textbf{$t_{\rm MG}$ (ms)}}}    & 40     & \multicolumn{1}{c|}{($58.6,0.3048,1$)} & \multicolumn{1}{c|}{($55.2,0.29,1)$} & ($32.68,0.2221,0.746$) \\ \cline{2-5} 
\multicolumn{1}{c|}{}                            & 100    & \multicolumn{1}{c|}{($55.66,0.3013,1$)}                  & \multicolumn{1}{c|}{($52.91,0.29,1)$}               &   ($28.66,0.2268,0.6913$)                     \\ \hline
\end{tabular}
}
}

\subfloat[$\sigma^2$ for different $t_{\rm TTT}$ and $t_{\rm MG}$.]
{
\scalebox{0.75}
{
\begin{tabular}{cc|ccc}
\hline

\multicolumn{2}{c|}{\multirow{2}{*}{($a_2$,$b_2$,$c_2$)}} & \multicolumn{3}{c}{\textbf{$t_{\rm TTT}$ (ms)}}                                                                        \\ \cline{3-5} 
\multicolumn{2}{c|}{}                                     & \multicolumn{1}{c|}{0}                 & \multicolumn{1}{c|}{40}             & 160                   \\ \hline \hline
\multicolumn{1}{c|}{\multirow{2}{*}{\textbf{$t_{\rm MG}$ (ms)}}}    & 40     & \multicolumn{1}{c|}{($425.2,0.167,1.55$)} & \multicolumn{1}{c|}{($327.7,0.15,1.435)$} & ($109,0.032,0.963$) \\ \cline{2-5} 
\multicolumn{1}{c|}{}                            & 100    & \multicolumn{1}{c|}{($315.1,0.1212,1.4$)}                  & \multicolumn{1}{c|}{($263.8,0.1224,1.41)$}               &   ($81.93,0.018,0.8344$)                     \\ \hline
\end{tabular}

}
}
\end{center}
\end{table}
\begin{table}[t]
\begin{center}
\caption{Gaussian parameters for different $h_{\rm UAV}$ and $N_t$. }
\label{tab:mu_sigma_height_antenna}
\subfloat[$\mu$ and $\sigma^2$ for different $h_{\rm UAV}$.]
{
\scalebox{0.75}
{
\begin{tabular}{cl|ccc}
\hline
\multicolumn{2}{c|}{\multirow{2}{*}{}} & \multicolumn{3}{c}{\textbf{$h_{\rm UAV}$} (m)}                                                     \\ \cline{3-5} 
\multicolumn{2}{c|}{}        & \multicolumn{1}{c|}{80}                 & \multicolumn{1}{c|}{100}                & 120                   \\ \hline \hline
\multicolumn{2}{c|}{$(a_1,b_1,c_1)$}  & \multicolumn{1}{c|}{($72.65,0.2736,1$)} & \multicolumn{1}{c|}{($58.6,0.3048,1)$}  & ($37.09,0.4058,1$)    \\ \hline
\multicolumn{2}{c|}{($a_2,b_2,c_2$)}  & \multicolumn{1}{c|}{($593,0.1,1.539$)}  & \multicolumn{1}{c|}{$425.2,0.167,1.55$} & ($233.7,0.222,1.417$) \\ \hline

\end{tabular}}
}

\subfloat[$\mu$ and $\sigma^2$ for different $N_t$.]
{
\scalebox{0.75}
{
\begin{tabular}{cl|ccc}
\hline
\multicolumn{2}{c|}{\multirow{2}{*}{}} & \multicolumn{3}{c}{\textbf{Number of antenna elements~$N_t$}}                                                     \\ \cline{3-5} 
\multicolumn{2}{c|}{}        & \multicolumn{1}{c|}{4}                 & \multicolumn{1}{c|}{8}                & 16                   \\ \hline \hline
\multicolumn{2}{c|}{$(a_1,b_1,c_1)$}  & \multicolumn{1}{c|}{($144.50,0.28,1.00$)} & \multicolumn{1}{c|}{($58.6,0.3048,1)$}  & ($23.50,0.66,1$)    \\ \hline
\multicolumn{2}{c|}{($a_2,b_2,c_2$)}  & \multicolumn{1}{c|}{($2747,-0.23,1.409$)}  & \multicolumn{1}{c|}{$425.2,0.167,1.55$} & ($163.2,0.31,1.50$) \\ \hline
\end{tabular}
}}
\end{center}
\end{table}

\subsection{Characteristics of HOC Statistics for Gaussian Scenario}
To obtain simplified heuristic closed-form expressions of the Gaussian parameters, we tried with different linear and non-linear functions containing the variables $d=vT$ and $\lambda_{\text{GBS}}$ using the MATLAB curve fitting toolbox. We then obtain the 2D power fits where the values of $\mu$ and $\sigma^2$ can be obtained as in \eqref{eq:PMF_param_var_normal}. For $t_{\rm MG}$~$=40$~ms and $t_{\rm TTT}$~$=0$~ms, we report the values of the mean parameters as, $a_1=58.6$, $b_1=0.305$, and $c_1=1$. The values of the variance parameters are $a_2=425.2$, $b_2=0.167$, and $c_2=1.55$. The proposed expression of $\mu$ provided us a fitting with root mean squared error (RMSE) of $0.63$ and adjusted R-square value $0.9956$ for $t_{\rm MG}$~$=40$~ms and $t_{\rm TTT}$~$=0$~ms. 

In Fig.~\ref{fig:ho_count_fit_normal}, we show the trend of these two parameters with respect to $d$ and $\lambda_{\text{GBS}}$. The accuracy of this approximation is justified in Fig.~\ref{fig:mu_lambda_theo_sim} where the expression is plotted in comparison with the plots obtained through simulations. From Fig.~\ref{fig:mu_lambda_theo_sim}, we can conclude that the Gaussian parameters do not show substantial variability with respect to the GBS density. It is worth noting that, we consider $v$ up to $120$ km/h for the fitting purpose. However, this approximation also holds well for $v=160$ km/h considering $t_{\rm TTT}$~$=0$~ms, which will be shown in Section~\ref{sec:simulation}.

We also run extensive simulations for different $h_{\rm UAV}$, $t_{\rm TTT}$, $t_{\rm MG}$, and $N_t$ to obtain the respective Gaussian PMF parameters. Table~\ref{tab:mu_sigma_MG_TTT}(a) and Table~\ref{tab:mu_sigma_MG_TTT}(b) present the parameters involved with $\mu$ and $\sigma^2$, respectively for different measurement gaps and TTT values. Note that higher values of $\mu$ and $\sigma^2$ correspond to a higher number of handovers made by the UAV within measurement duration $T$. Observing the values of these parameters, we can conclude that the UAV makes fewer handovers for higher TTT values since it skips some handover triggering events. For a similar reason, larger $t_{\rm MG}$ slightly decrease the HOC. Another interesting observation is that the values of $\sigma^2$ are hardly dependent on the GBS density for higher $t_{\rm TTT}$. 

In  Table~\ref{tab:mu_sigma_height_antenna}(a), we report the parameters for three different UAV heights. The UAV tends to make fewer handovers at higher altitudes due to low relative mobility which also corroborates the findings of~\cite{3dbeam_HO}. Finally, we run curve fitting for different $N_t$ and provide the results in Table~\ref{tab:mu_sigma_height_antenna}(b). The parameter values show that the UAV will make more handovers for lower $N_t$ values. This is because there exist stronger sidelobes for lower $N_t$ which contributes to more scattered coverage~\cite{Moin-3d, chowdhury2021ensuring}. Higher $N_t$ results in increased but weaker and narrower sidelobes. The variance of HOC PMF increases with the GBS density but overall decreases with $N_t$ due to lower values of $a_2$.    

\begin{figure}[t]
\centering
	\subfloat[]{
			\includegraphics[width=.99\linewidth]{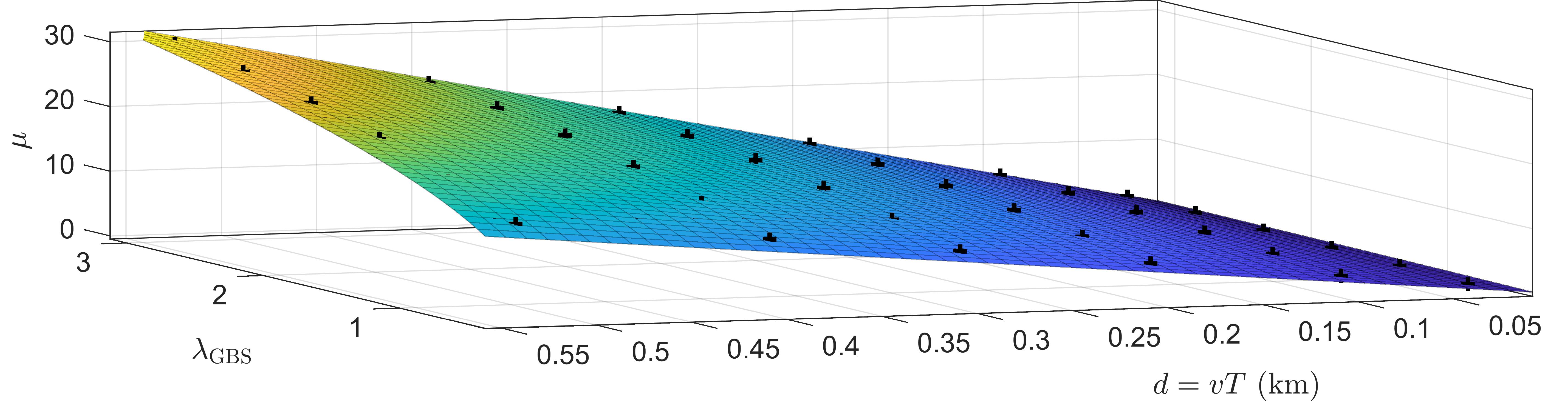}} 
			\hfill
		\subfloat[]{
			\includegraphics[width=.99\linewidth]{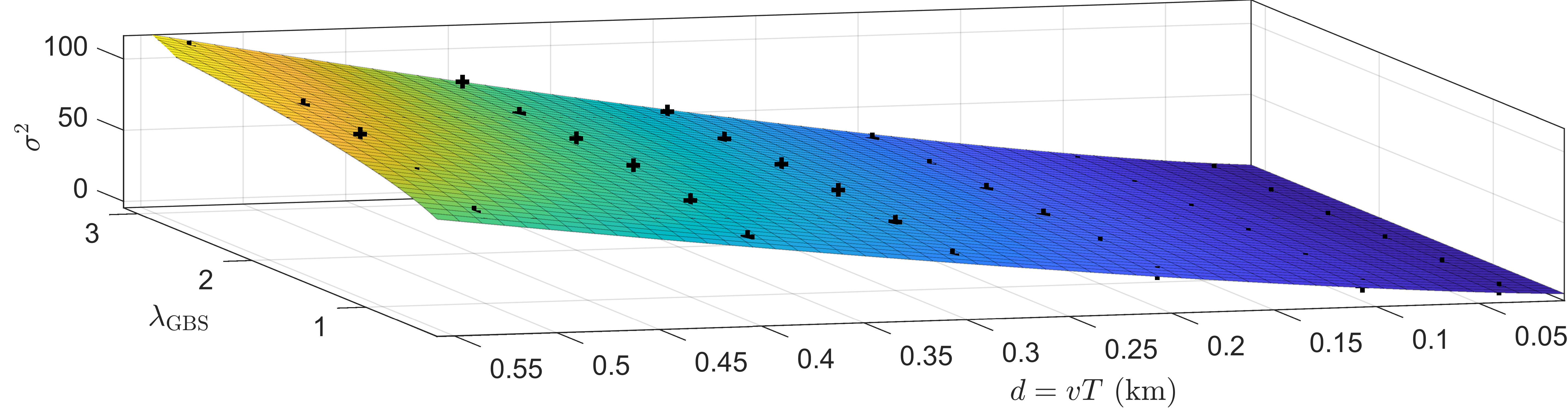}} 
    \caption{{Fitting of the HOC PMF for Gaussian parameters (a) $\mu$ and (b) $\sigma^2$ with respect to the covered distance $d=vT$ and GBS density $\lambda_{\text{GBS}}$. Here, we consider $t_{\rm MG}$~$=40$~ms and $t_{\rm TTT}$~$=0$~ms}}
    \label{fig:ho_count_fit_normal}
\end{figure}

\begin{figure}[t]
\centering
	\subfloat[]{
			\includegraphics[width=.75\linewidth]{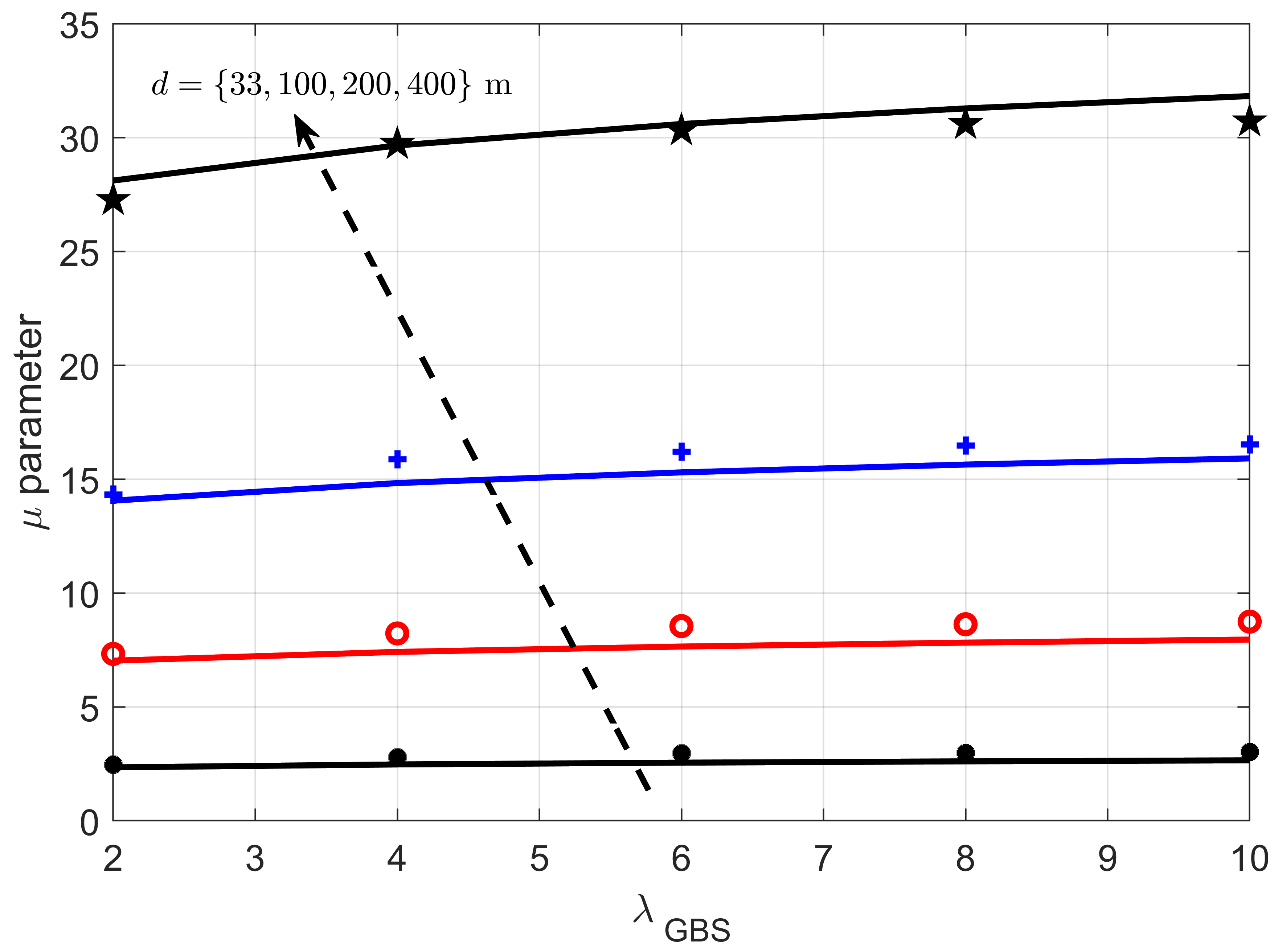}} 
			\hfill
		\subfloat[]{
			\includegraphics[width=.75\linewidth]{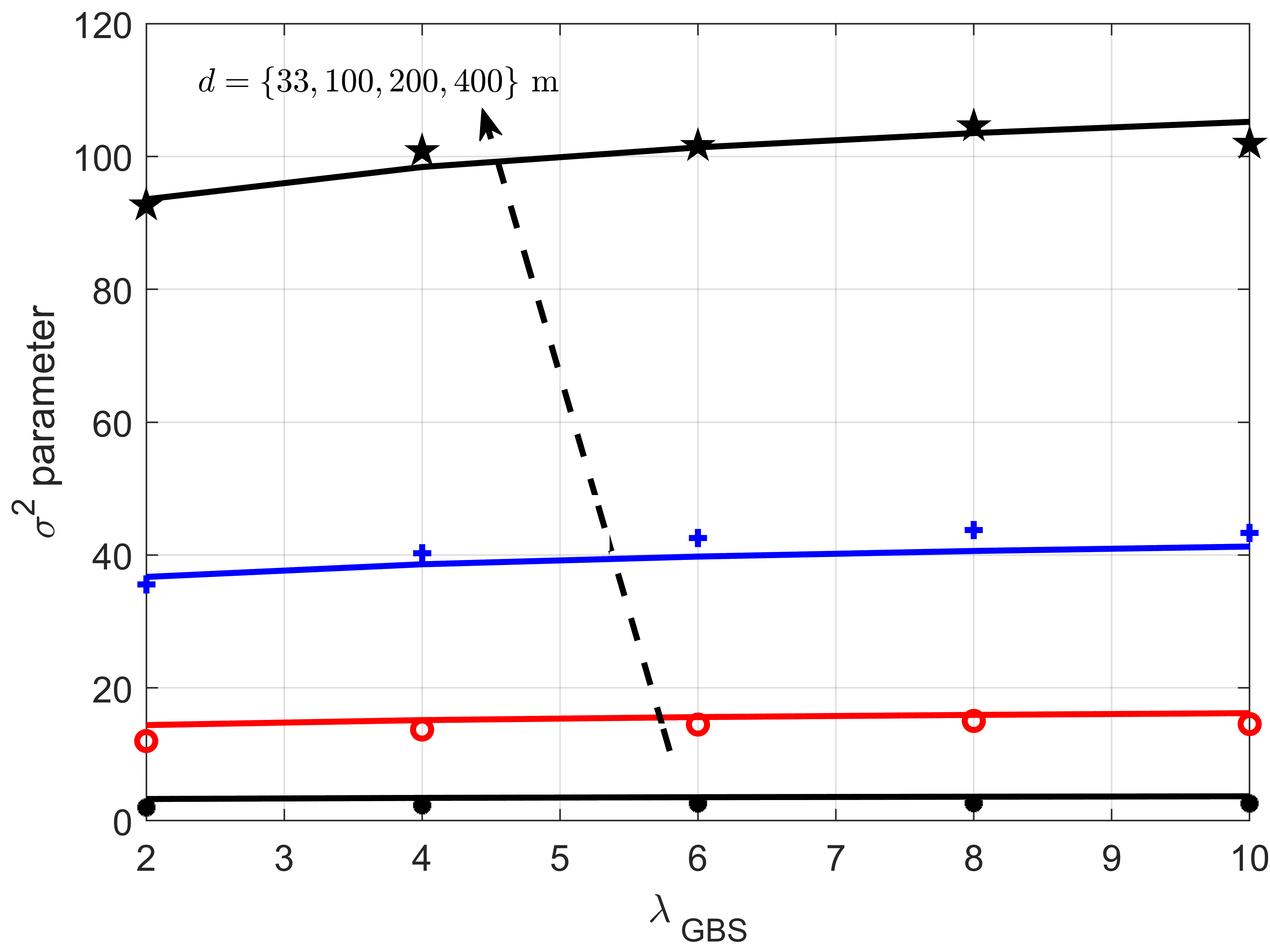}} 
    \caption{{Approximation of the Gaussian parameters (a) $\mu$ and (b) $\sigma^2$ for $t_{\rm MG}$~$=40$~ms and $t_{\rm TTT}$~$=0$~ms.}}
    \label{fig:mu_lambda_theo_sim}
\end{figure}

\section{Cramer-Rao Lower Bound for UAV speed Estimation}
\label{sec:crlb}
Cramer-Rao lower bound (CRLB) provides a lower bound on the variance of an unbiased estimator. An estimator is considered to be unbiased if the expected value of the estimates coincides with the true value of the parameter of interest. If the variance of an unbiased estimator can achieve the CRLB, it is then said to be an efficient estimator\cite{book:Kay97,arvind_ho}. 

\subsection{CRLB Derivation Using Gaussian PMF Approximation}

Next, we present the expression of CRLB by considering the PMF presented in \eqref{eq:PMF_normal}.
\begin{thm}
Let a UAV is flying over a cellular network with GBS density $\lambda_{\text{GBS}}$ at a fixed height over a linear trajectory and make $H$ handovers within a time period $T$. If the PMF of the HOC can be expressed as $f_H^{(n)}(h;v)$ as in \eqref{eq:PMF_normal}, then the CRLB of the estimated speed $\hat{v}$ is given by:  
\begin{equation}
    \emph{var}(\hat{v}) \geq \frac{(v\sigma)^2}{(c_1\mu)^2+0.5(c_2\sigma)^2},
    \label{eq:crlb_thm}
\end{equation}
where $c_1$ and $c_2$ are the parameters associated with mean and variance, respectively as defined in \eqref{eq:PMF_param_var_normal}.
\end{thm}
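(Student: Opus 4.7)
The plan is to compute the Fisher information $I(v)$ for a single handover-count observation drawn (approximately) from the Gaussian PMF in \eqref{eq:PMF_normal}, and then invert it to get the Cramer-Rao bound $\text{var}(\hat{v})\ge 1/I(v)$. Since both $\mu$ and $\sigma^2$ depend on $v$ through the power-law form in \eqref{eq:PMF_param_var_normal}, I will treat the parameterization as a location-scale family in which the unknown scalar $v$ enters both moments, and carry the chain rule carefully throughout.

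Concretely, first I would write the log-likelihood of one observation as
\begin{equation*}
\ln f_H^{(n)}(h;v) = -\tfrac{1}{2}\ln(2\pi\sigma^2) - \tfrac{(h-\mu)^2}{2\sigma^2},
\end{equation*}
and then use the well-known Fisher-information identity for a Gaussian with both mean and variance parameterized by a single scalar,
\begin{equation*}
I(v) \;=\; \frac{1}{\sigma^2}\Bigl(\tfrac{\partial \mu}{\partial v}\Bigr)^{\!2} \;+\; \frac{1}{2\sigma^4}\Bigl(\tfrac{\partial \sigma^2}{\partial v}\Bigr)^{\!2}.
\end{equation*}
I would derive this from scratch (differentiating the log-likelihood twice and using $E[(h-\mu)^2]=\sigma^2$, $E[(h-\mu)]=0$, and $E[(h-\mu)^4]=3\sigma^4$ so that the cross term between $\partial\mu/\partial v$ and $\partial\sigma^2/\partial v$ vanishes in expectation), which avoids any ambiguity about independence of the two parameterizations.

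Next I would compute the two partial derivatives using \eqref{eq:PMF_param_var_normal} with $d=vT$. Because $\mu = a_1 \lambda_{\text{GBS}}^{b_1}(vT)^{c_1}$ is a pure power of $v$, logarithmic differentiation immediately gives $\partial \mu/\partial v = c_1\mu/v$, and similarly $\partial \sigma^2/\partial v = c_2\sigma^2/v$. Substituting these into the Fisher-information expression yields
\begin{equation*}
I(v) \;=\; \frac{c_1^2\mu^2}{v^2\sigma^2} \;+\; \frac{c_2^2}{2v^2} \;=\; \frac{(c_1\mu)^2 + 0.5\,(c_2\sigma)^2}{(v\sigma)^2},
\end{equation*}
and inverting $I(v)$ gives exactly the bound in \eqref{eq:crlb_thm}. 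The main subtlety, rather than an obstacle per se, is that $f_H^{(n)}$ is technically a discretized Gaussian supported on non-negative integers, so strictly speaking one should either (i) treat the Gaussian as a close continuous surrogate and apply the standard CRLB regularity conditions, or (ii) replace the integrals by sums and argue that the moment identities $\sum_h f_H^{(n)}(h)=1$, first and second central moments equal $\mu$ and $\sigma^2$, remain accurate to the same order as the fitting approximation itself. I would state this explicitly as an approximation consistent with the fitting adopted in Lemma~1, so that the standard Gaussian CRLB machinery applies and the stated bound follows.
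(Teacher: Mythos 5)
Your proposal is correct and follows essentially the same route as the paper: the paper likewise invokes the standard Gaussian Fisher-information identity $I(v)=\sigma^{-2}(\partial\mu/\partial v)^2+\tfrac{1}{2}\sigma^{-4}(\partial\sigma^2/\partial v)^2$ (citing Kay rather than rederiving it), computes $\partial\mu/\partial v=c_1\mu/v$ and $\partial\sigma^2/\partial v=c_2\sigma^2/v$ from the power-law fits, and inverts to obtain \eqref{eq:crlb_thm}. Your explicit remark about the discretized-Gaussian approximation is a reasonable extra caveat the paper leaves implicit.
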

\begin{proof}
By definition of CRLB, we know that 
\begin{equation}
     \text{var}(\hat{v}) \geq \frac{1}{I(v)},
\label{eq:crlb_fisher}
\end{equation}
where $I(v)$ is the \textit{Fisher Information} and can be expressed as:
\begin{equation}
I(v)=\mathop{\mathbb{E}}\Bigg [\bigg(  \frac{\partial \log f_H^{(n)}(h;v)}{\partial v}   \bigg )^2 \Bigg].
\end{equation}
Here, $\mathop{\mathbb{E}}[\cdot]$ is the expectation operator with respect to $H$. Consider the PMF approximation $f_H^{(n)}(h)$  in \eqref{eq:PMF_normal} which can be represented as a Gaussian distribution, $H \sim \mathcal{N(\mu,\sigma^2)}$, where $\mu$ and $\sigma^2$ were given in \eqref{eq:PMF_param_var_normal}, respectively. The Fisher information for the general Gaussian observations is then given by~\cite[Section~3.9]{book:Kay97}:
\begin{equation}
    \begin{split}
        I(v)&= \bigg(\frac{\partial \mu}{\partial v}\bigg)^2\frac{1}{\sigma ^2}+\frac{1}{2(\sigma ^4)}\bigg(\frac{\partial \sigma^2}{\partial v}\bigg)^2\\
        &=\frac{(c_1a_1\lambda_{\text{GBS}}^{b_1}T^{c_1}v^{c_1-1})^2}{\sigma^2}+\frac{(c_2a_2\lambda_{\text{GBS}}^{b_2}T^{c_2}v^{c_2-1})^2}{2\sigma^4}\\
        &\overset{(a)}{=} \frac{1}{v^2}\Bigg[\frac{(c_1\mu)^2}{\sigma^2}+\frac{c_2^2}{2}\Bigg],\\
    \end{split}
    \label{eq:fisher_normal}
\end{equation}
where (a) comes from the expressions of $\mu$ and $\sigma^2$. By placing the expression of $I(v)$ in \eqref{eq:fisher_normal}, we can obtain the CRLB for $\hat{v}$ as into \eqref{eq:crlb_thm}.\end{proof}
\subsection{Minimum Variance Unbiased Estimator Analysis}

So far, we have derived the closed-form CRLB expression by considering Gaussian distribution for approximating the HOC PMF. In this sub-section, we will derive a UAV speed estimator $\hat{v}$ that takes the HOC as input in the closed form. Moreover, we will derive the mean and variance of this estimator $\hat{v}$ and show that this is indeed an MVU estimator. 
We first consider the Rao-Blackwell-Lehmann-Scheffe (RBLS) theorem to find the MVU speed estimator \cite[Section~5.5]{book:Kay97}. According to Neyman-Fisher factorization theorem, if we can factorize the PMF $f_H^{(n)}(h;v)$ as~\cite{arvind_ho}
\begin{equation}
    f_H^{(n)}(h;v)=g(\mathcal{F}(h),v)r(h),
\end{equation}
where $g(\cdot)$ is a function that depends on $h$ only by $\mathcal{F}(h)$, then we can conclude that $\mathcal{F}(h)$ is sufficient statistics of $v$\cite[Section~5.4]{book:Kay97}. We can factor the HOC PMF for our case as: 
\begin{equation}
    f_H^{(n)}(h;v)=\underbrace{ \frac{1}{\sqrt{2\pi\sigma^2}}e^{-\frac{(\mathcal{F}(h)-\mu)^2}{2\sigma^2}}}_{g(\mathcal{F}(h),v)}~\times~\underbrace{1}_{r(h)}.
\end{equation}

Hence, $\mathcal{F}(h)=h$ is sufficient statistics of $v$ and $\mathcal{F}(h)$ can be used to find an MVU estimator such that $\hat{v}=s(\mathcal{F})$ is an unbiased estimator of $v$. By inspecting the relationship between the mean of HOC PMF and $v$ as presented in (10), we propose the following estimator for the UAV speed $v$ as: 
\begin{equation}
    \hat{v}=\sqrt[c_1]{\frac{h}{a_1\times \lambda_{\text{GBS}}^{b_1} \times T^{c_1}}}.
    \label{eq:estimator_v}
\end{equation}
We can calculate the mean of this estimator to check whether this is an unbiased estimator as:
\begin{equation}
    \begin{split}
         E(\hat{v})&= \frac{E(h^{\frac{1}{c_1}})}{K_1}\\
        &=\frac{1}{K_1}\sum_{h=0}^{\infty}h^{\frac{1}{c_1}}\frac{1}{\sqrt{2\pi\sigma^2}}e^{-\frac{(h-\mu)^2}{2\sigma^2}},\\
    \end{split}
    \label{eq:mean_v_hat}
\end{equation}
where $K_1=\sqrt[c_1]{a_1\times \lambda_{\text{GBS}}^{b_1} \times T^{c_1}}$. Due to the complexity of the right hand side of~\eqref{eq:mean_v_hat}, it is not possible to obtain a closed-form solution. However, for $c_1 \neq 1$, $E(\hat{v})$ will not be equal to $v$, and hence, the estimator presented in \eqref{eq:estimator_v} is biased. Next, we calculate the variance of this estimator as follows:
\begin{equation}
    \begin{split}
         \text{var}(\hat{v})&= \frac{\text{var}(h^{\frac{1}{c_1}})}{K_1^2}\\
         &= \frac{1}{K_1^2}\big[E(h^{\frac{2}{c_1}})-E(h^{\frac{1}{c_1}})\big]\\
        &=\frac{1}{\sqrt{2\pi\sigma^2}K_1^2}\Bigg[\sum_{h=0}^{\infty}h^{\frac{2}{c_1}}e^{-\frac{(h-\mu)^2}{2\sigma^2}}-\sum_{h=0}^{\infty}h^{\frac{1}{c_1}}e^{-\frac{(h-\mu)^2}{2\sigma^2}}\Bigg].\\
    \end{split}
    \label{eq:var_v_hat}
\end{equation}

Similar to \eqref{eq:mean_v_hat}, we can not obtain a closed-form expression of $\text{var}(\hat{v})$. Hence, we will obtain the mean and variance of the estimator in \eqref{eq:estimator_v} numerically using the semi-analytic expressions in \eqref{eq:mean_v_hat} and \eqref{eq:var_v_hat}, respectively.

Interestingly, for $c_1 = 1$, the mean of the estimator $\hat{v}$ in \eqref{eq:estimator_v} can be expressed as:
\begin{equation}
   E(\hat{v})=\frac{E(h)}{K_2}=\frac{\mu}{K_2}=v,
\end{equation}
 which is an unbiased estimator of $v$. Here, $K_2={a_1\times \lambda_{\text{GBS}}^{b_1} \times T}$ and $\mu=a_1\times\lambda_{\text{GBS}}^{b_1} \times vT=vK_2$. Since RBLS theorem is used to obtain $\hat{v}$, it turns into an MVU estimator for $c_1 = 1$. Next, we derive the variance of $\hat{v}$ to verify whether this is an efficient estimator as follows: 
 
 \begin{equation}
    \begin{split}
          \text{var}(\hat{v})&= \text{var}(\frac{h}{K_2})=\frac{\sigma^2}{K_2^2}\\
        &\overset{(a)}{=}\bigg(\frac{v\sigma}{\mu}\bigg)^2.\\
    \end{split}
    \label{eq:var_v_hat2}
\end{equation}
Here (a) comes from $K_2=\frac{\mu}{v}$. By comparing \eqref{eq:var_v_hat2} with \eqref{eq:fisher_normal}, we can conclude that the variance of MVU estimator in \eqref{eq:var_v_hat2} is greater than the CRLB, and hence, this $\hat{v}$ for $c_1 = 1$ is not an efficient estimator. However, as the GBS density increases, the variance of the MVU estimator becomes closer to the CRLB, which we will show in Section~\ref{sec:simulation}. 

\begin{cor}
For low values of $t_{\rm TTT}$, the parameter $c_1$ becomes $1$, and the estimator $\hat{v}$ becomes an MVU estimator whose variance increases with UAV speed $v$.


\end{cor}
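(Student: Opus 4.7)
The plan is to read off the claim in three pieces, since two of the three have essentially been established in the discussion preceding the Remark. First, the assertion that $c_1 \to 1$ for low $t_{\rm TTT}$ is an empirical statement about the curve fit. I would simply appeal to Table~\ref{tab:mu_sigma_MG_TTT}(a): for $t_{\rm TTT}=0$~ms and $t_{\rm TTT}=40$~ms across both measurement gaps the fitted exponent is exactly $c_1=1$, whereas for $t_{\rm TTT}=160$~ms it drops noticeably below one (around $0.69$--$0.75$). In the Gaussian approximation this reflects the physical fact that, when the UAV is not skipping triggering events, the expected HOC grows linearly in the distance $d=vT$, which pins $c_1$ to unity.

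Second, for the MVU claim I would not redo any work: the derivation just above the Remark already shows via the Neyman--Fisher factorization that $\mathcal{F}(h)=h$ is sufficient for $v$, and that under $c_1=1$ the estimator in \eqref{eq:estimator_v} collapses to $\hat{v}=h/K_2$ with $E(\hat{v})=\mu/K_2=v$. Because $\hat{v}$ is then an unbiased function of the complete sufficient statistic, Rao--Blackwell--Lehmann--Scheffe (already cited in the text) yields MVU. So at this point of the proof I would just refer to the preceding paragraph and to \eqref{eq:var_v_hat2}.

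Third, for the monotonicity of $\mathrm{var}(\hat{v})$ in $v$ I would substitute the closed forms of \eqref{eq:PMF_param_var_normal} into \eqref{eq:var_v_hat2}. With $c_1=1$ we have $\mu = a_1\lambda_{\text{GBS}}^{b_1} vT$ and $\sigma^2 = a_2 \lambda_{\text{GBS}}^{b_2}(vT)^{c_2}$, so
\begin{equation}
\mathrm{var}(\hat{v}) \;=\; \Big(\tfrac{v\sigma}{\mu}\Big)^{2} \;=\; \frac{a_2}{a_1^{2}}\,\lambda_{\text{GBS}}^{\,b_2-2b_1}\,T^{\,c_2-2}\,v^{\,c_2}.
\end{equation}
Since the fits in Table~\ref{tab:mu_sigma_MG_TTT}(b) give $c_2>0$ (indeed $c_2\approx 1.4$--$1.55$ in the low-$t_{\rm TTT}$ regime), the right-hand side is strictly increasing in $v$, which is the last piece of the Remark.

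The only genuinely non-routine point is the first one, because it is empirical rather than analytic. I would be careful to phrase it as an observation supported by the curve-fit in Table~\ref{tab:mu_sigma_MG_TTT}(a), accompanied by the heuristic argument that, whenever $t_{\rm TTT}$ is small enough that virtually no $A3$ triggering event is filtered out, the mean number of handovers in time $T$ is proportional to the traversed distance $vT$ and hence to $v$ itself; the remaining two bullets then follow cleanly from results already proved in Section~\ref{sec:crlb}.
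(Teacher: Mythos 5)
Your proposal is correct and follows essentially the same route as the paper: read off $c_1=1$ for $t_{\rm TTT}\in\{0,40\}$~ms from the curve-fit in Table~\ref{tab:mu_sigma_MG_TTT}(a), invoke the Neyman--Fisher factorization and RBLS argument already given before the Remark to get unbiasedness and the MVU property when $c_1=1$, and use \eqref{eq:var_v_hat2} for the variance. Your only addition is making the last step explicit by substituting \eqref{eq:PMF_param_var_normal} to get $\mathrm{var}(\hat{v})\propto v^{c_2}$ with $c_2>0$, which the paper leaves implicit in the statement $\mathrm{var}(\hat{v})=(v\sigma/\mu)^2$; that is a welcome clarification, not a different approach.
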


\section{Mobility State Detection}
\label{sec:MSD analysis}
Here, we perform the statistical analysis of MSD by which a UAV flying will be categorized into one of the three mobility states: low, medium, and high as specified in 3GPP LTE Rel-8 specifications~\cite{arvind_ho,mehta_mobility}.
The speed estimator presented in~\eqref{eq:estimator_v} will be used to estimate the UAV speed based on the HOC for a certain measurement duration $T$. Based on this estimator, the UAV will be categorized into one of the three mobility states, namely, low ($S_{\rm L}$), low ($S_{\rm M}$), and high ($S_{\rm H}$). Similar to~\cite{arvind_ho}, the following conditions will be used to determine the MSD:

\begin{equation}
\label{eq:MSD_condition}
  \mathcal{S}= \left \{
  \begin{aligned}
    & S_{\rm L} &&  \textrm{if}~\hat{v} \leq v_{\rm l}, \\
    & S_{\rm M} &&  \textrm{if}~v_{\rm l} \leq \hat{v} \leq v_{\rm u}, \\
    & S_{\rm H} &&  \textrm{if}~\hat{v} \geq v_{\rm u}, \\
  \end{aligned} \right.
\end{equation} 
where, $\mathcal{S} \in \{S_{\rm L}, S_{\rm M}, S_{\rm H}\}$ is the detected mobility state of the UAV, whereas, $v_{\rm l}$ and $v_{\rm u}$ represent the lower and upper speed thresholds, respectively.

\subsection{Mobility State Probabilities}

We define the mobility state probability as the probability of a UAV being categorized into a particular mobility state based on its speed $v$. Based on this definition, we can define the mobility state probabilities as $P(\mathcal{S}=S_{\rm L};v)$, $ P(\mathcal{S}=S_{\rm M};v)$, and $ P(\mathcal{S}=S_{\rm H};v)$, when the mobility state is detected as $S_{\rm L}$, $S_{\rm M}$, $S_{\rm H}$, respectively, for a UAV speed $v$. 

For a given UAV speed $v$, the UAV will make $H$ handovers within a certain $T$. Since $H$ is a random variable, the estimated speed obtained from~\eqref{eq:estimator_v} will also be a random variable. Hence, missed detections and false alarms can occur during the mobility state calculation. The analytic expressions for the mobility state probabilities can be derived from the PMF of $\hat{v}$ as
\begin{equation}
\begin{aligned}
    f_{\hat{v}}(v)&= P(\hat{v}=v)=P\bigg(\sqrt[c_1]{\frac{H}{a_1\times \lambda_{\text{GBS}}^{b_1} \times T^{c_1}}}=v\bigg)\\
    &=P\bigg(H=a_1\times \lambda_{\rm GBS}^{b_1}\times (vT)^{c_1}\bigg)=f_H(h),
\end{aligned}
\end{equation}
where $h=a_1\times \lambda_{\rm GBS}^{b_1}\times (vT)^{c_1}$. Using the approximation with Gaussian distribution as in~\eqref{eq:PMF_normal}, the PMF of $\hat{v}$ can be expressed as:
\begin{equation}
\begin{aligned}
     f_{\hat{v}}(v)&=f_H(h)~\approx f_H^{(n)}(h)\\
     &=\frac{1}{\sqrt{2\pi\sigma^2}}e^{-\frac{(h-\mu)^2}{2\sigma^2}}, \text{for}~h \in \{0,1,2,...\}.
\end{aligned}
\end{equation}
 
Based on the above analysis, we can express the three mobility state probabilities as
\begin{equation}
\begin{aligned}
 P(\mathcal{S}=S_{\rm L};v)&= P(\hat{v} \leq v_{\rm l})~\approx \sum_{h=0}^{h_{\rm l}} f_H^{(n)}(h),\\
 P(\mathcal{S}=S_{\rm M};v)&= P(v_{\rm l} < \hat{v} \leq v_{\rm u})~\approx \sum_{h=h_{\rm l}+1}^{h_{\rm u}} f_H^{(n)}(h),\\
 P(\mathcal{S}=S_{\rm H};v)&= P(\hat{v} > v_{\rm u})~\approx \sum_{h=h_{\rm u}+1}^{\infty} f_H^{(n)}(h),\\
\end{aligned}
\label{eq:mobility_probabilities}
\end{equation}
where $h_{\rm l}$ and $h_{\rm u}$  are the lower and upper handover count thresholds, respectively, and can be expressed as
\begin{equation}
\begin{aligned}
  h_{\rm l}&=\floor*{a_1\times \lambda_{\rm GBS}^{b_1}\times (v_{\rm l}T)^{c_1}}, \\
  h_{\rm u} &= \floor*{a_1\times \lambda_{\rm GBS}^{b_1}\times (v_{\rm u}T)^{c_1}}.
\end{aligned}
\label{eq:hoc_thresholds}
\end{equation}

For a particular speed thresholds $v_{\rm l}$ and $v_{\rm u}$, the associated HOC thresholds can significantly impact the MSD of a UAV. In~\eqref{eq:hoc_thresholds}, we provide the HOC thresholds for MSD theoretically which are associated with specific speed thresholds~\cite{arvind_ho}. In other related works, however, the HOC thresholds for MSD are calculated only heuristically from the HOC statistics and only for few different ground UE velocities and GBS densities~\cite{3gpp.36.839,mehta_mobility,turkka_moblility}. Moreover, to the best of authors' knowledge, the statistical relationship between the HOC and UAV speed $v$, is not considered before. In this work, we have derived a closed-form expression for the optimum HOC thresholds as a function of GBS density $\lambda_{\rm GBS}$, HOC measurement duration $T$,  and the speed thresholds ($v_{\rm l}$ and $v_{\rm u}$).  

\subsection{Probability of Detection and False Alarm}

The probability of detection represents the accurate detection probability of UAV mobility state which can be expressed as:
\begin{equation}
  P_{\rm D}= \left \{
  \begin{aligned}
    &P(\mathcal{S}=S_{\rm L};v), &&  \textrm{if}~\hat{v} \leq v_{\rm l}, \\
    &P(\mathcal{S}=S_{\rm M};v), &&  \textrm{if}~v_{\rm l} \leq \hat{v} \leq v_{\rm u}, \\
    &P(\mathcal{S}=S_{\rm H};v), &&\textrm{if}~\hat{v} \geq v_{\rm u}.
  \end{aligned} \right.
  \label{eq:Probability_of_detection}
\end{equation} 

On the other hand, the probability of false alarm is the false detection probability of UAV mobility state which can be expressed as $P_{\rm FA}=1-P_{\rm D}$.
\section{Simulation Results}
\begin{table}[t]
\centering
\renewcommand{\arraystretch}{1}
\caption {Simulation parameters.}
\label{Tab:Sim_par}
\scalebox{0.99}
{\begin{tabular}{lc}
\hline
Parameter & Value \\
\hline
$P_{\text{GBS}}$ & $46$ dBm  \\ 
$N_t$ & $8$   \\ 
$\theta$ & $6^\circ$ \\ 
$h_{\text{UAV}}$ & $100$ m\\ 
$h_{\text{GBS}}$ & $30$ m\\ 
${\text{f}_c}$ & 1.5 GHz\\ 
$\lambda$\textsubscript{GBS} & $0.5$, $1$, $1.5$, $2$, $2.5$, and $3$ per $\text{km}^2$\\ 
$v$ & $10 , 30, 60, 90,$ and $120$~km/h\\
$t_{\rm MG}$ & $40$~ms\\
\hline
\end{tabular}}
\end{table}
\label{sec:simulation}
In this section, we will validate the HOC PMF approximation using the Gaussian distribution  $f_H^{(n)}(h)$ by plotting its mean square error (MSE) performance. Then we study the variance of the proposed speed estimator $\hat{v}$ with respect to GBS density $\lambda_{\text{GBS}}$, UAV speed $v$, and HOC measurement time $T$. Simulation parameters are provided in Table~\ref{Tab:Sim_par}. Unless otherwise state, we consider $t_{\rm MG}$~$=40$~ms and $t_{\rm TTT}$~$=0$~ms.
\subsection{PMF Approximation Analysis}
In Section~\ref{sec:stat}, we have introduced the HOC PMF approximation using the Gaussian distribution and the relevant parameters were obtained through curve fitting. Here, we study the accuracy of this approximation by evaluating the MSE between the approximate PMF and the PMF obtained from simulations. The MSE can be expressed as:
\begin{equation}
    \text{MSE}=\frac{1}{L}\sum_{l=1}^L \big[ f_H(l)-f_H^{(n)}(l)\big]^2,
\end{equation}
where $L$ is the number of samples in the PMFs. In Fig.~\ref{fig:MSE_vs_lambda}, we plot the MSE performance with respect to various $v$ and $\lambda_{\text{GBS}}$. Note that here $\lambda_{\text{GBS}}=0.5$ means there is 1 GBS per $2~\text{km}^2$. We can conclude that our approximated PMF matches closely with the actual PMF obtained from the simulations. Other than $v=10$ km/h, the MSE performance tends to decrease slightly with increasing GBS density. This happens due to the slightly erroneous PMF approximation for low UAV speeds. For low $v$, a portion of the left-tail of $f_H^{(n)}$ is not considered due to the non-negative nature of HOC which is shown in~Fig.~\ref{fig:velocity_fitting_distributions}(a). The errors also tend to increase with increasing $v$. This is because the heuristic approximations of $\mu$ and $\sigma^2$ in~\eqref{eq:PMF_param_var_normal} start to deviate from the simulations for higher $v$, which is depicted in Fig.~\ref{fig:mu_lambda_theo_sim}.  


\begin{figure}[t]
\centering
\centering{\includegraphics[width=0.75\linewidth]{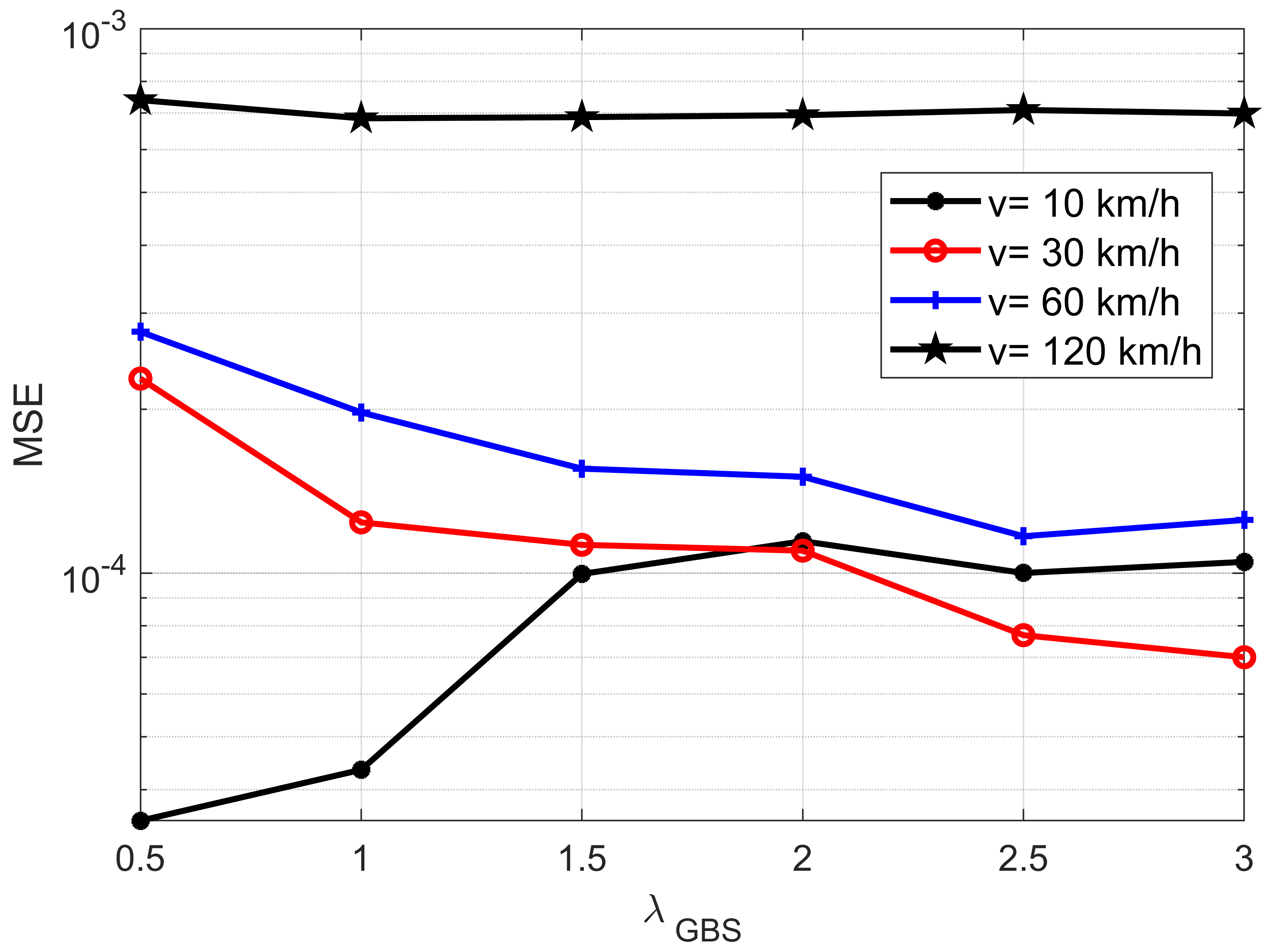}}
    \caption{{MSE versus $\lambda_{\text{GBS}}$ for different $v$ and $T=12$~s with PMF approximation using the Gaussian
distribution.}}
    \label{fig:MSE_vs_lambda}
\end{figure}  

\subsection{CRLB Analysis}
\begin{figure}[t]
\centering
	\subfloat[]{
			\includegraphics[width=.75\linewidth]{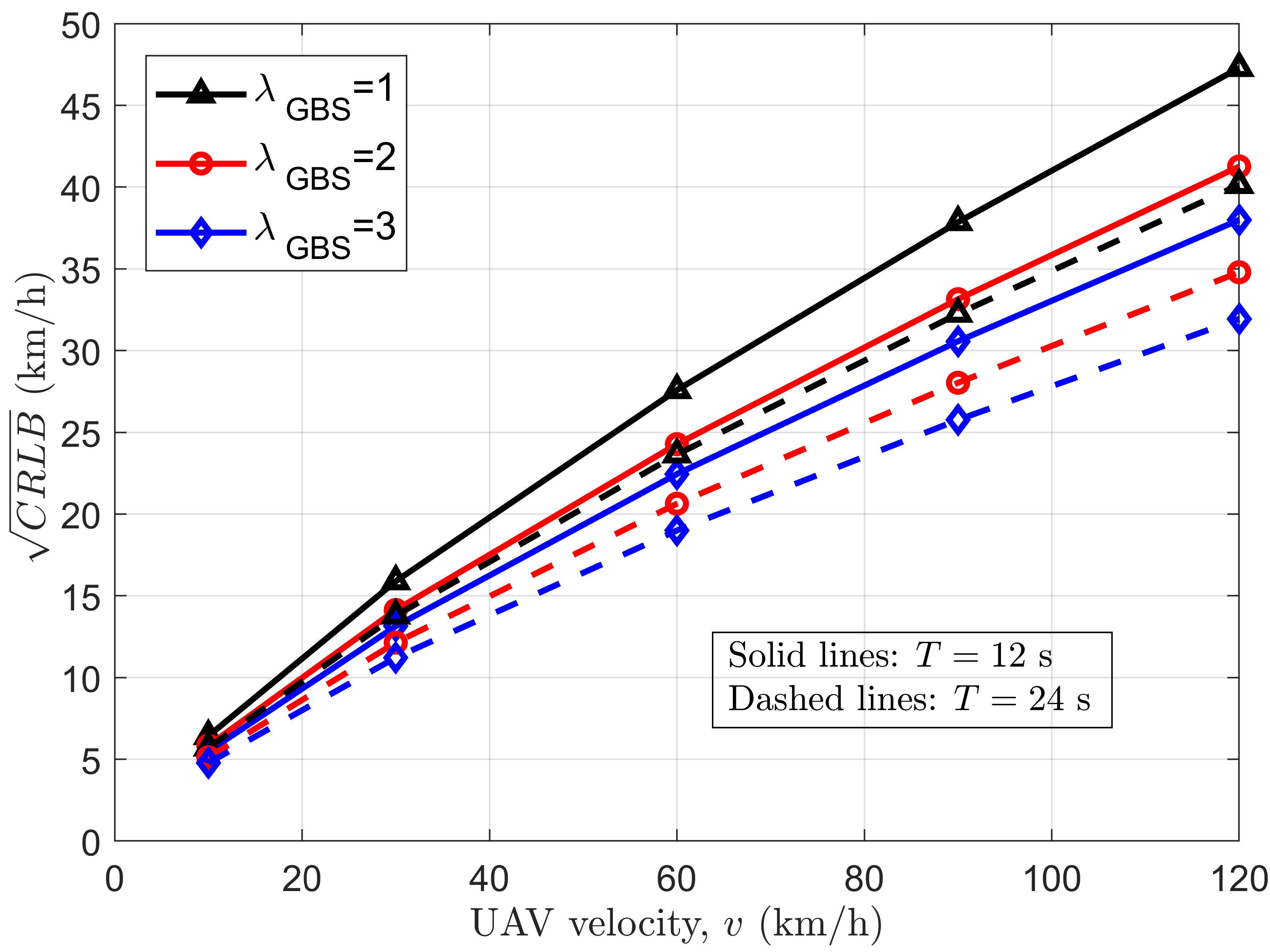}} 
			\hfill
		\subfloat[]{
			\includegraphics[width=.75\linewidth]{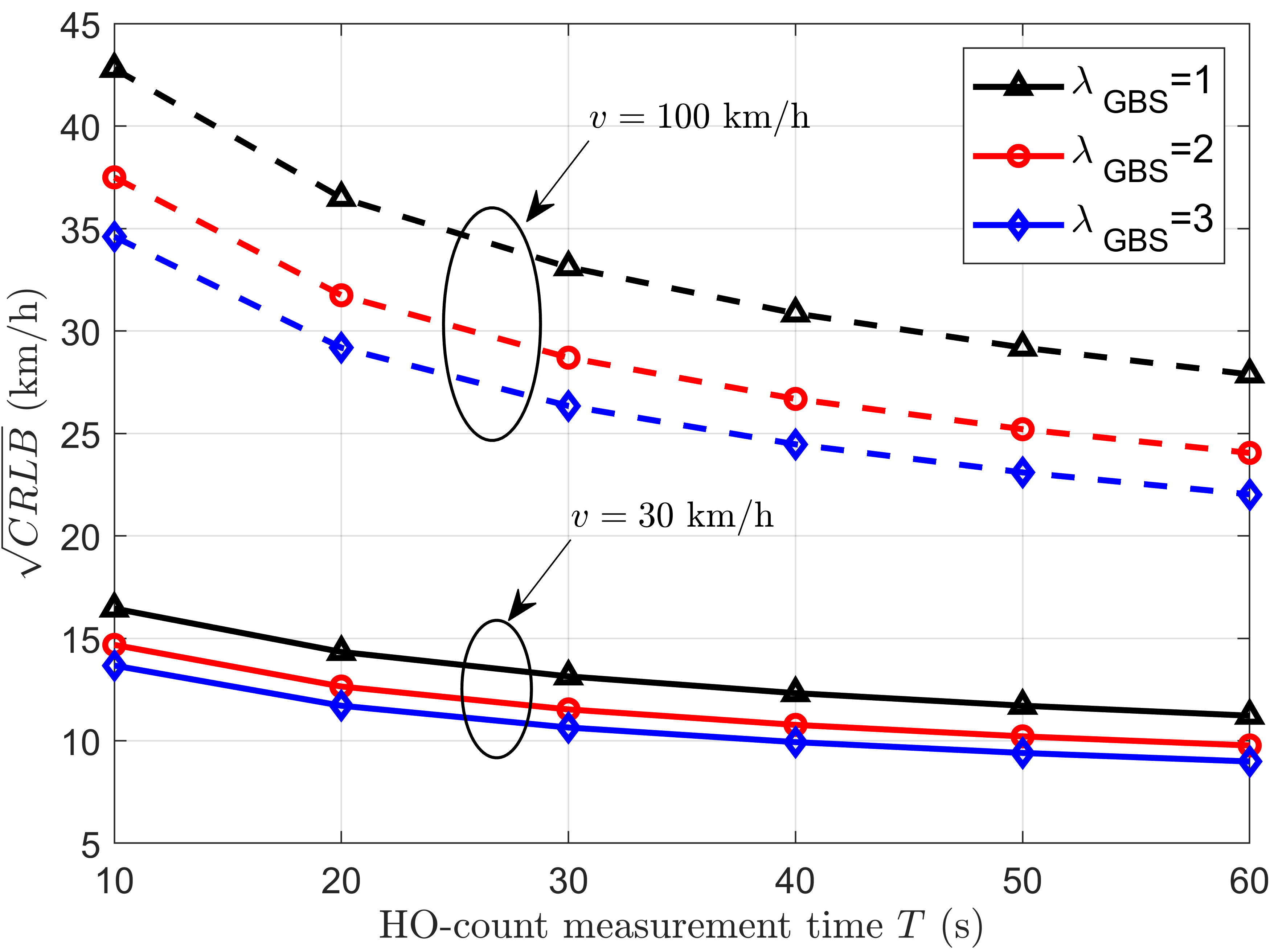}} 
    \caption{{(a) CRLB versus $v$ for different $\lambda_{\text{GBS}}$, and for $T=12$~s and $T=24$~s. (b) CRLB versus $T$ for different $\lambda_{\text{GBS}}$, and for $v=30~\text{km}/\text{h}$ and $v=100~\text{km}/\text{h}$.}}
    \label{fig:CRLB_vs_velocity_time}
\end{figure}
In Fig.~\ref{fig:CRLB_vs_velocity_time}(a), we plot the square root of the CRLB for the UAV speed estimator $\hat{v}$ with respect to GBS density $\lambda_{\text{GBS}}$. The CRLB of the proposed estimator decreases with increasing $\lambda_{\text{GBS}}$ and increases with $v$ as in \eqref{eq:crlb_thm}. This is in line with Fig.~\ref{fig:uav_velocity_PMF_empritical}, where the PMFs are closely spaced with each other for low $\lambda_{\text{GBS}}$, making it difficult to distinguish between different $v$. The square root of the CRLB increases with increasing UAV speed $v$ as shown in~\eqref{eq:crlb_thm}. For available commercial UAVs with maximum speed of $68$ km/h~\cite{drone_speed}, our proposed simple estimator provides RMSEs of $26$ km/h and $21$ km/h for $\lambda_{\text{GBS}}=1$ and $\lambda_{\text{GBS}}=3$, respectively, with $T=24$~s. Fig.~\ref{fig:CRLB_vs_velocity_time}(a) also shows that for $v=120$ km/h, the speed can be estimated with RMSE errors of $47$ km/h and $38$ km/h for $\lambda_{\text{GBS}}=1$ and $\lambda_{\text{GBS}}=3$, respectively. For UAV velocities less than $20$ km/h, the speed can be estimated with RMSE less than $10$ km/h for both of the HOC measurement durations.
\begin{figure}[t]
\centering
	\subfloat[Impact of $h_{\text{UAV}}$]{
			\includegraphics[width=.8\linewidth]{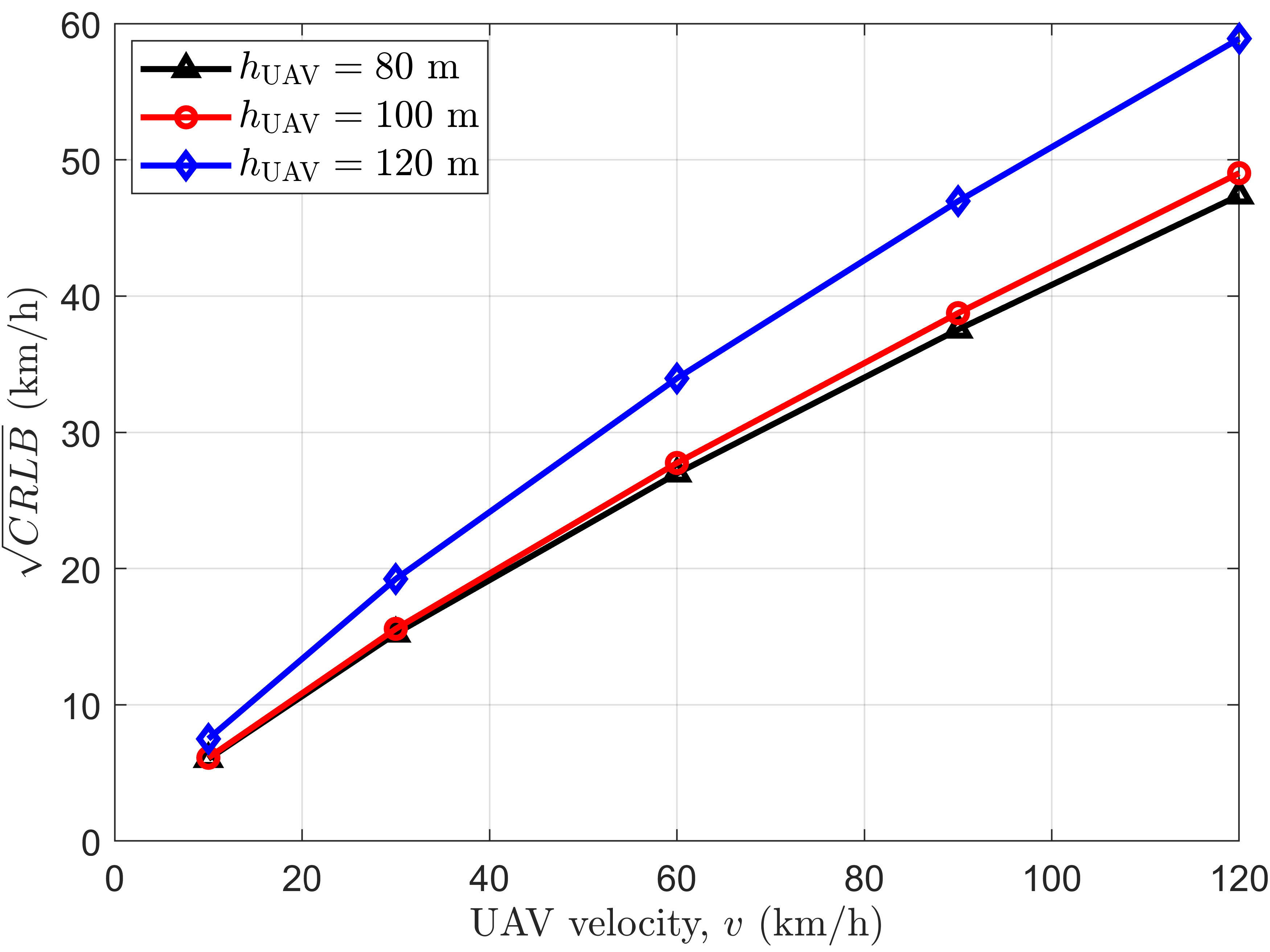}}//
		\subfloat[Impact of TTT]{
			\includegraphics[width=.8\linewidth]{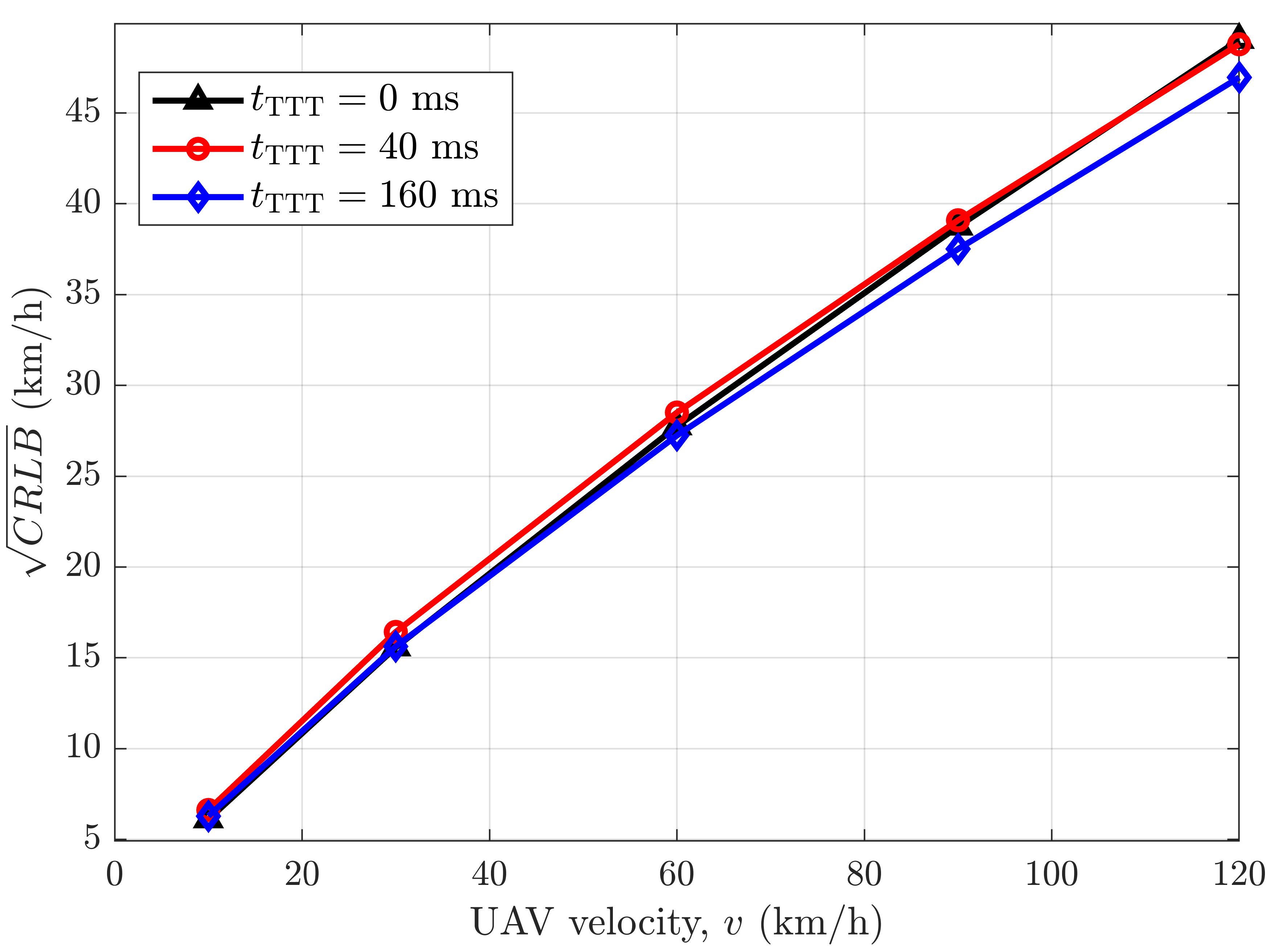}} 
    \caption{{(a) CRLB versus $v$ for different $h_{\text{UAV}}$. (b)  CRLB vs $v$ for different TTT. Here, we consider $\lambda_{\text{GBS}}$=1, $t_{\rm MG}$=$40$~ms, and $T=12$~s.}}
    \label{fig:CRLB_vs_velocity_height_ttt}
\end{figure}

In Fig.~\ref{fig:CRLB_vs_velocity_time}(b), we show the impact of HOC measurement time $T$ on the CRLB of our proposed speed estimator. For a given $\lambda_{\text{GBS}}$ and $v$, CRLB decreases with increasing $T$. As expected, a higher speed provides lower accuracy for the speed estimation. Overall, a longer HOC measurement window will provide better speed estimation since HOCs will be more distinguishable for various $v$ if we allow more time to count handovers made by the UAV traveling on a linear trajectory. Hence, there exists a trade-off between the rapidness and accuracy of the estimated speeds which is also evident in Fig.~\ref{fig:CRLB_vs_velocity_time}(a).

Fig.~\ref{fig:CRLB_vs_velocity_height_ttt} shows the impact of UAV height $h_{\text{UAV}}$ and $t_{\rm TTT}$ on the CRLB performance of the speed estimation with $\lambda_{\text{GBS}}=1$ and $T=12$~s. In Fig.~\ref{fig:CRLB_vs_velocity_height_ttt}(a), we can see that the increasing UAV heights tend to show higher RMSE error for speed estimation. This is because, the UAV makes fewer handovers with increasing $h_{\text{UAV}}$ which makes the HOC based speed estimation erroneous. For UAV speed upto $60$~km/h, the RMSE performances are quite similar for $h_{\text{UAV}}=80$~m and $h_{\text{UAV}}=100$~m. On the other hand, Fig.~\ref{fig:CRLB_vs_velocity_height_ttt}(b) shows the imapct of $t_{\rm TTT}$, from which we can conclude that \emph{the RMSE performances do not vary significantly with respect to $t_{\rm TTT}$ for a UAV speed less than $60$~km/h.}

The variance of the speed estimator given in~\eqref{eq:estimator_v} and its MVU estimator version for low TTT values are shown in Fig.~\ref{fig:CRLB_vs_var(v)}. After observing Fig.~\ref{fig:CRLB_vs_var(v)}(a), we can conclude that the variance of the biased estimator is higher than the CRLB for all of the GBS densities. The gap between the plots tends to decrease with increasing GBS density. As stated earlier, the estimator turns into an MVU estimator for $t_{\rm TTT}$$=0$~ms and $t_{\rm TTT}$$=40$~ms when the parameter $c_1$ equals to $1$. Even though the derived MVU estimator for low TTT values is not an efficient one, the plots of the variance match closely with those of the  CRLB, especially for low UAV velocities.   
\begin{figure}[t]
\centering
	\subfloat[$t_{\rm TTT}$~$=0$ ms]{
			\includegraphics[width=.8\linewidth]{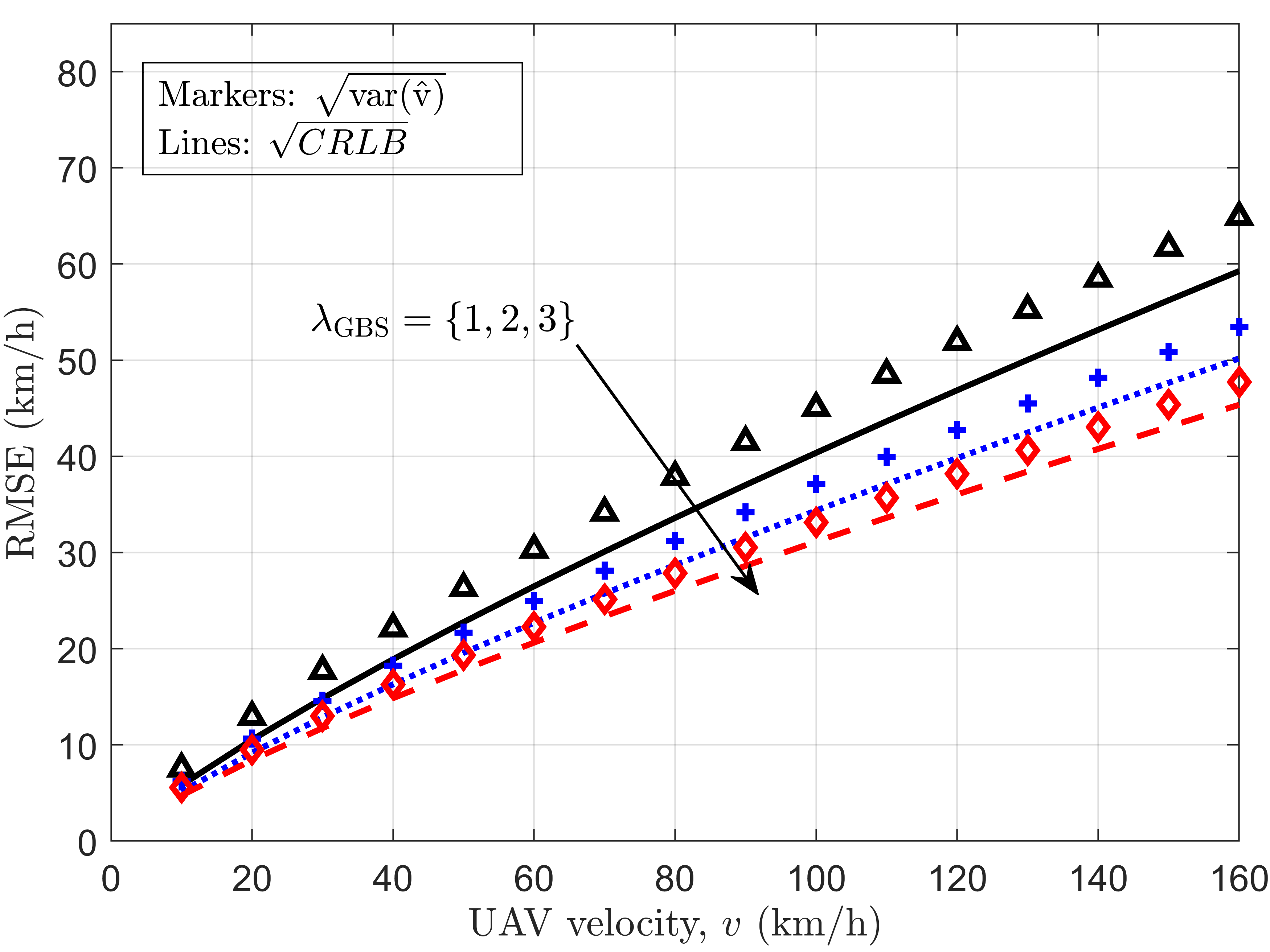}}\\
		\subfloat[$t_{\rm TTT}$~$=160$ ms]{
			\includegraphics[width=.8\linewidth]{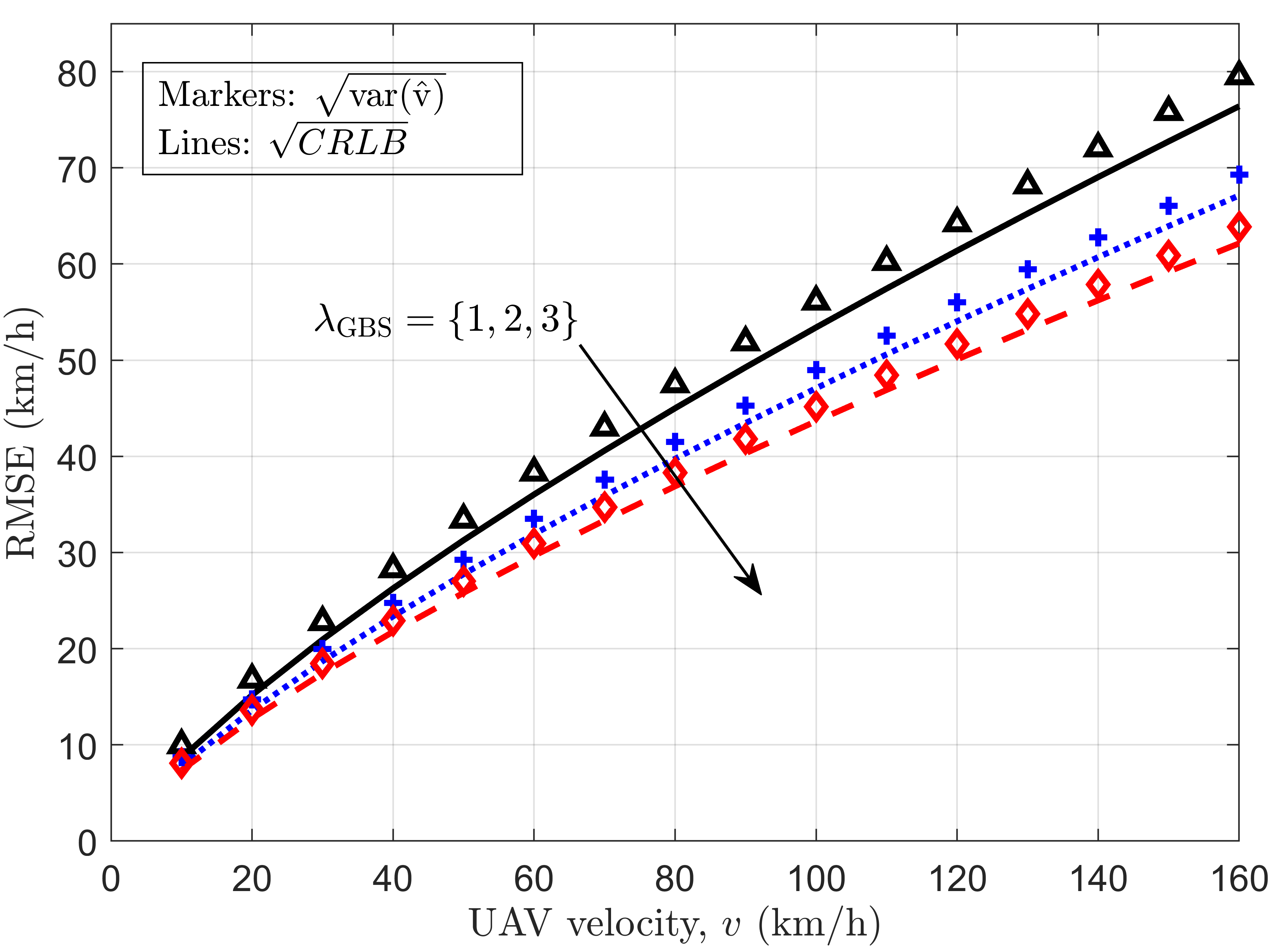}} 
    \caption{{Comparison between the CRLB and the proposed estimator with respect to UAV speed $v$ for different TTT values with $\lambda_{\text{GBS}}=1$.}}
    \label{fig:CRLB_vs_var(v)}
\end{figure}
\subsection{Mobility State Probabilities and Probabilities of Detection}
For obtaining the mobility state of a flying UAV, the service provider can set the speed thresholds $v_{\rm l}$ and $v_{\rm u}$ according to the regulations and requirements. In this work, we set the values of $v_{\rm l}$ and $v_{\rm u}$ to $40$~km/h and $80$~km/h, respectively. We obtain the analytic plots by using \eqref{eq:mobility_probabilities}-\eqref{eq:hoc_thresholds}. In Fig.~\ref{fig:MSD_vs_velocity}, we plot the mobility state probabilities for different $v$ and $\lambda_{\text{GBS}}$. After carefully observing the figures, we can conclude that the slopes of curves slightly increase during their transitions with higher GBS density. In other words, larger GBS densities result in better MSD accuracy. 

\begin{figure}[t]
\centering
	\subfloat[]{
			\includegraphics[width=.75\linewidth]{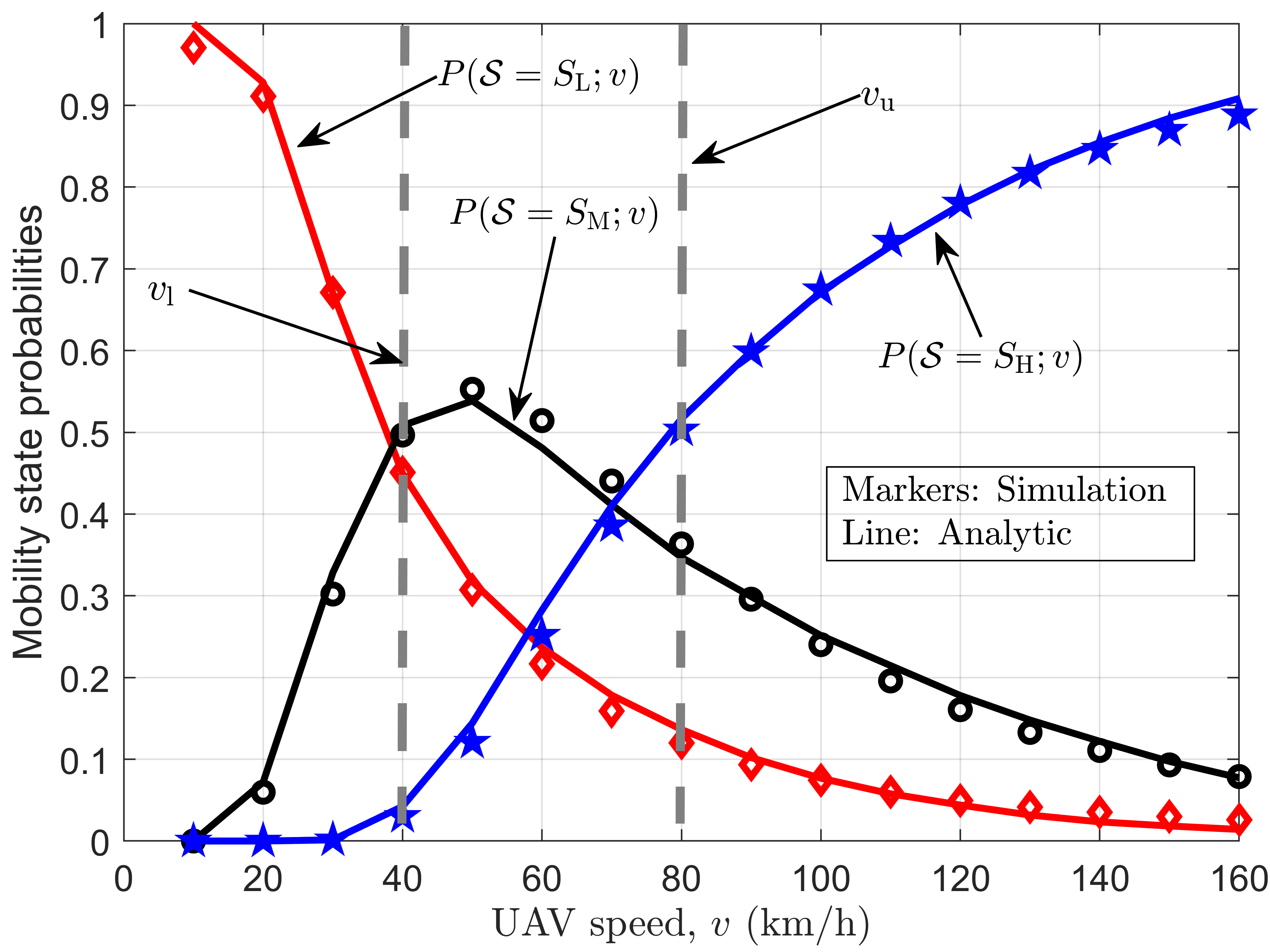}} 
			\hfill
		\subfloat[]{
			\includegraphics[width=.75\linewidth]{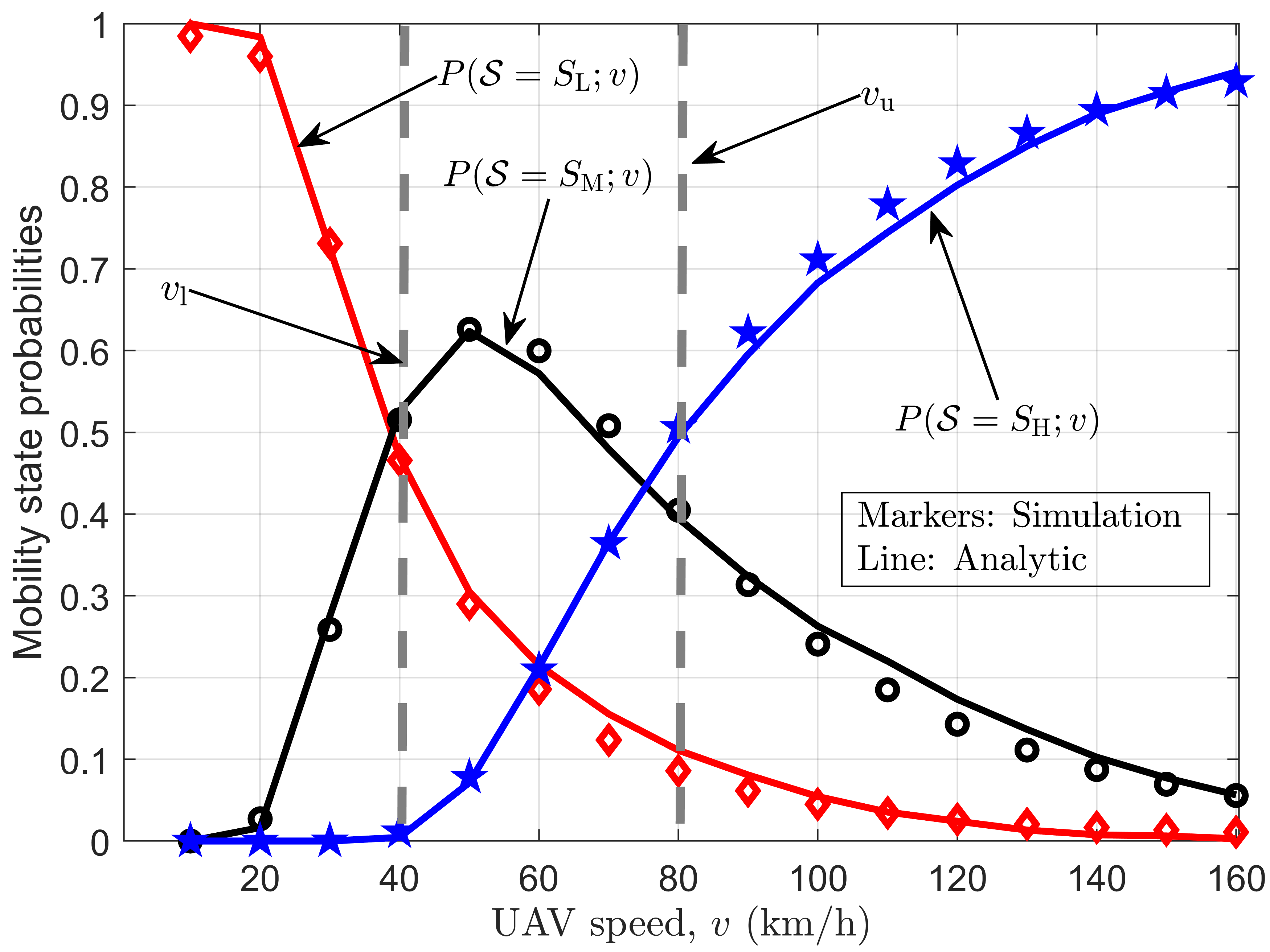}} 
    \caption{{Mobility state probabilities with respect to UAV speed $v$ for $T=12$~s, $v_{\rm l}=40$~km/h, and $v_{\rm u}=80$~km/h; (a) $\lambda_{\text{GBS}}=1$; (b) $\lambda_{\text{GBS}}=3$.}}
    \label{fig:MSD_vs_velocity}
    \end{figure}
    
\begin{figure}[t]
\centering
	\subfloat[]{
			\includegraphics[width=.75\linewidth]{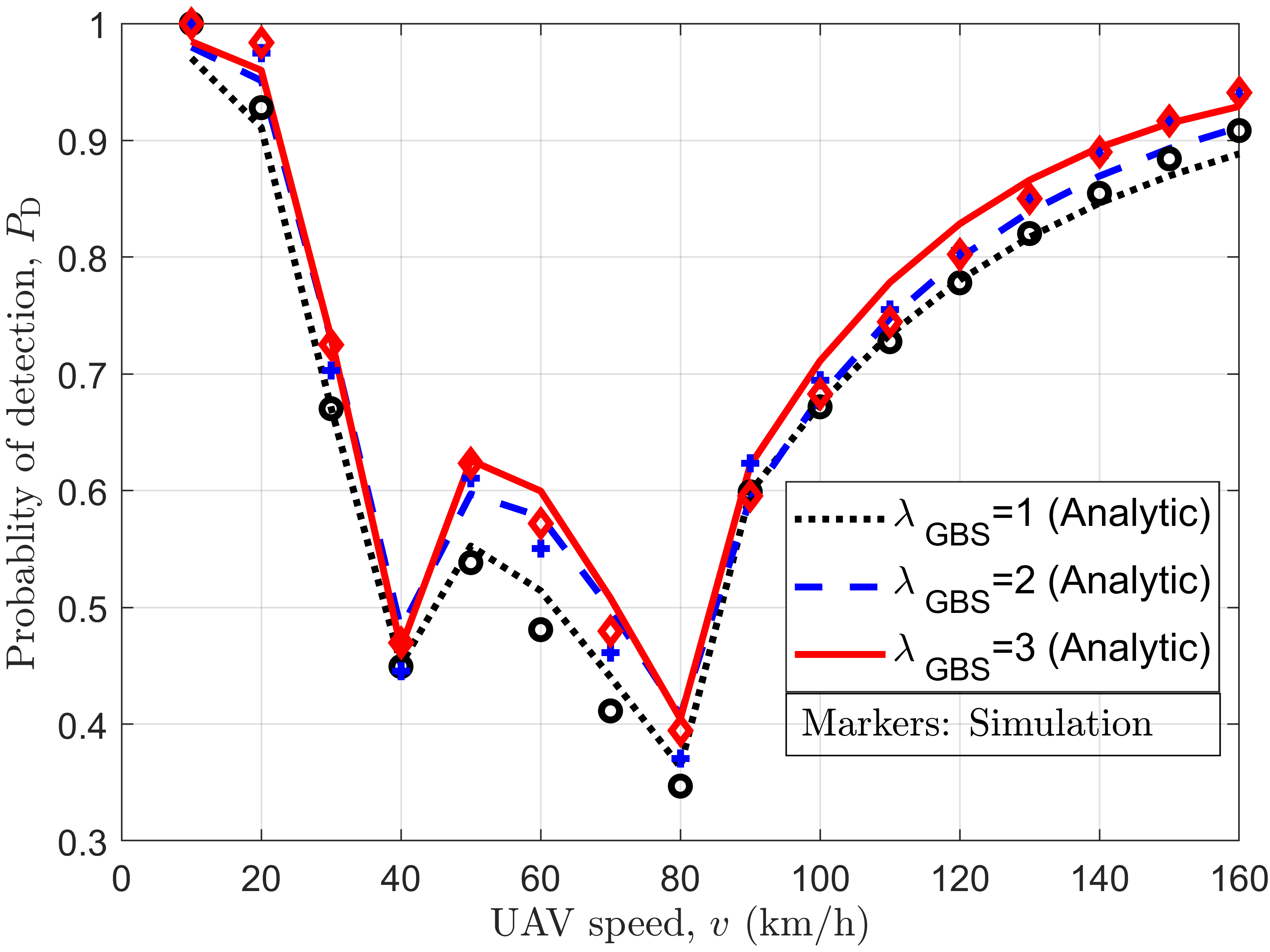}} 
			\hfill
		\subfloat[]{
			\includegraphics[width=.75\linewidth]{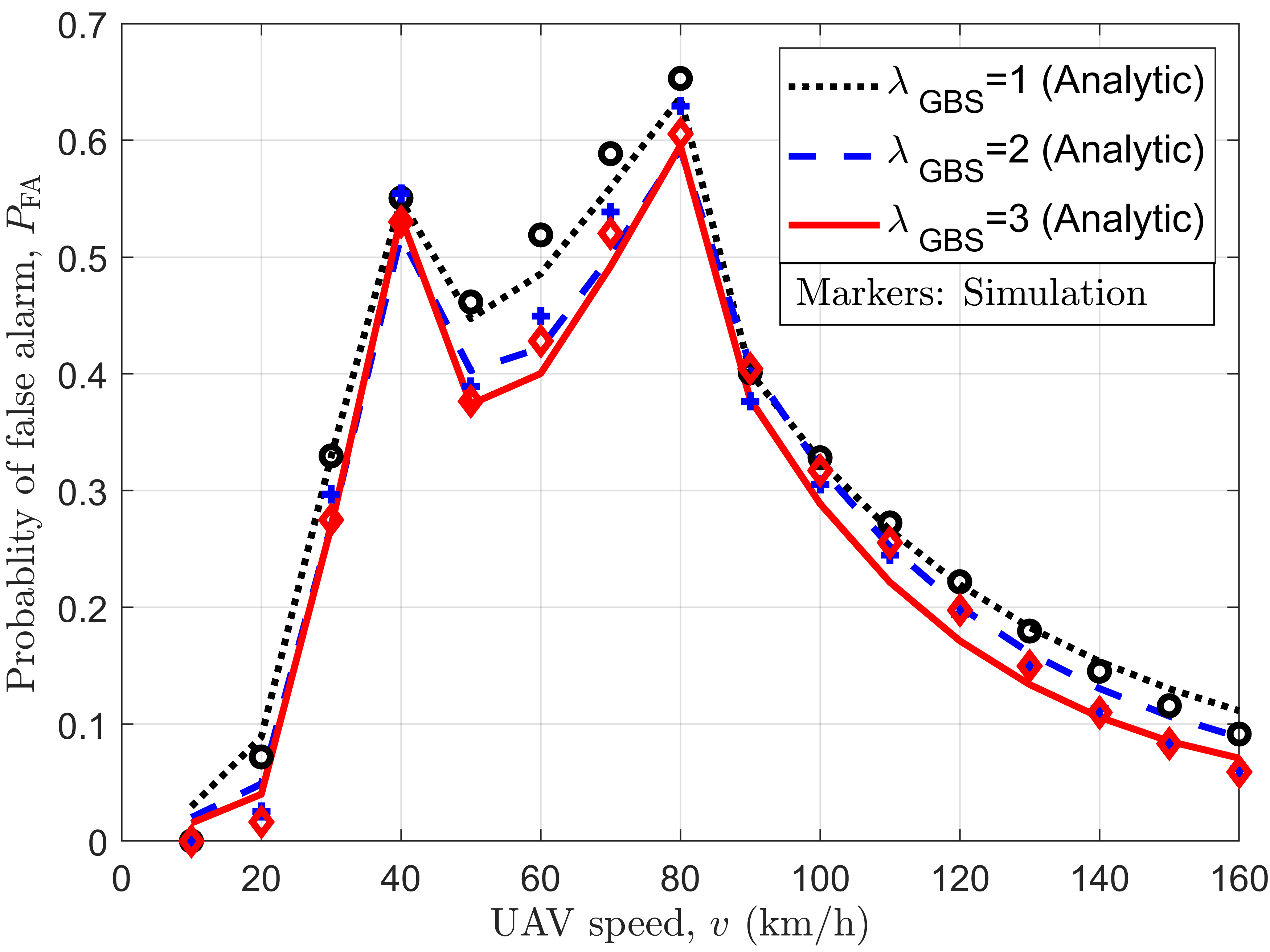}} 
    \caption{{(a) Probability of detection $P_{\rm D}$. (b) Probability of false alarm $P_{\rm FA}$.}}
    \label{fig:POD_PFA_vs_velocity}
    \end{figure}
    
We report the probabilities of detection and false alarm for different $\lambda_{\text{GBS}}$ in Fig.~\ref{fig:POD_PFA_vs_velocity}. The probability of detection decreases and the probability of false alarm increases as the UAV speed approaches one of the speed thresholds. On the other hand, the probability of detection is high for low and high UAV velocities. Note that the analytic probabilities of detection are obtained using~\eqref{eq:Probability_of_detection}.

In Fig.~\ref{fig:POD_vs_threshold}. we plot the average probability of detection $P_{\rm D}$ for different combinations of the HOC thresholds $h_{\rm l}$ and $h_{\rm u}$ such that $h_{\rm u}>h_{\rm l}$. Here, we consider $\lambda_{\text{GBS}}=1$ and $T=12$~s to calculate the average $P_{\rm D}$ for UAV speed in the range of $10$ km/h to $160$~km/h. For a given pair of HOC thresholds and network configuration, the corresponding speed thresholds can be obtained using~\eqref{eq:estimator_v}. We can conclude that the average $P_{\rm D}$ shows high values for low values of $h_{\rm l}$ and $h_{\rm u}$. This is because, for low threshold values, most of the higher UAV velocities will be detected correctly as high mobility states which will result into higher average $P_{\rm D}$. Average $P_{\rm D}$ at first decreases with increasing $h_{\rm u}$ and then increases again. This is due to the fact that for low $h_{\rm l}$ and high $h_{\rm u}$ values, most of the UAV velocities will be detected correctly as medium mobility state category. 
\begin{figure}[t]
\centering
\centering{\includegraphics[width=0.85\linewidth]{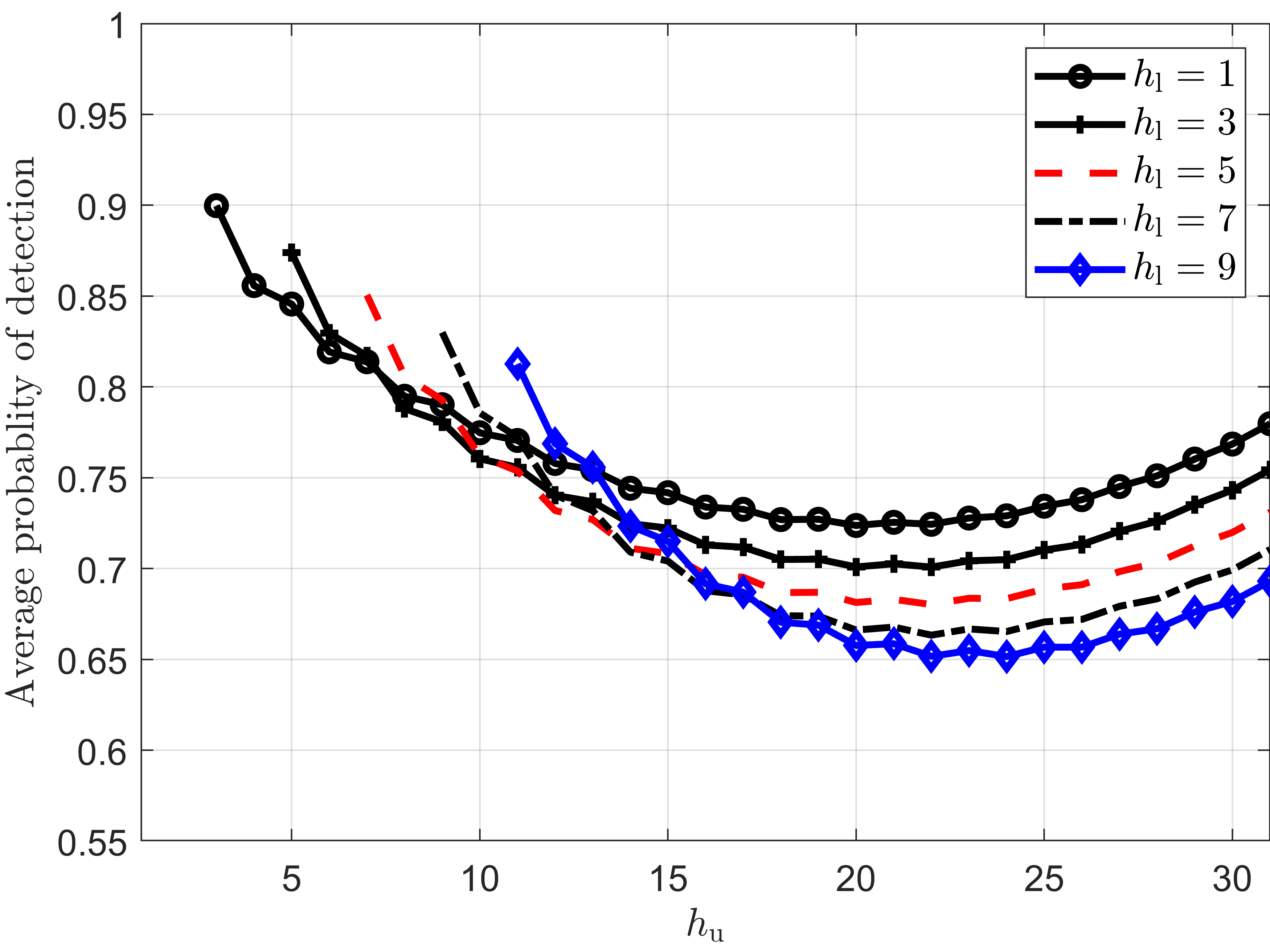}}
    \caption{{Average probability of detection for different combination of $h_{\rm l}$ and $h_{\rm u}$ with  $\lambda_{\text{GBS}}=1$.}}
    \label{fig:POD_vs_threshold}
\end{figure}

\subsection{Mobility State Detection With Variable UAV speed}
We study the functionality of our proposed estimator for variable UAV speed and different HOC measurement duration $T$. We consider a UAV flying in a straight line trajectory at $100$~m altitude while changing its speed with respect to time as shown in Fig.~\ref{fig:variable_velocity_MSD}. The GBS density of the underlying cellular network is considered to be $1$. The values of $v_{\rm l}$ and $v_{\rm u}$ are considered to be $40$ and $80$ km/h, respectively. The speed is estimated with two different schemes, namely, using a discrete window and a sliding window. In the discrete window scheme, the HOC windows of duration $T$ do not overlap with each other throughout the flight time. On the other hand, we consider a HOC window of duration $T$ that slides through the whole flight duration for the sliding window scheme.

The speed estimation performance of both schemes is close to each other where the sliding window technique provides a slightly lower RMSE. However, in the sliding window schemes, the speed is estimated more frequently which leads to higher computational complexity. As expected, the speed estimation process is more accurate for $T=24$~s as shown in Fig.~\ref{fig:variable_velocity_MSD}(b). The corresponding MSD performances for discrete method which are depicted in Fig.~\ref{fig:variable_velocity_MSD} also show higher MSD accuracy for $T=24$. The instances of false alarms tend to decrease with longer $T$. However, a longer HOC interval can also lead to lower estimation accuracy if the UAV changes its speed rapidly during the duration $T$. The sliding window scheme provides the MSD earlier than the discrete method since it estimates the mobility states continuously. 

\begin{figure}[t]
\centering
	\subfloat[]{
			\includegraphics[width=.75\linewidth]{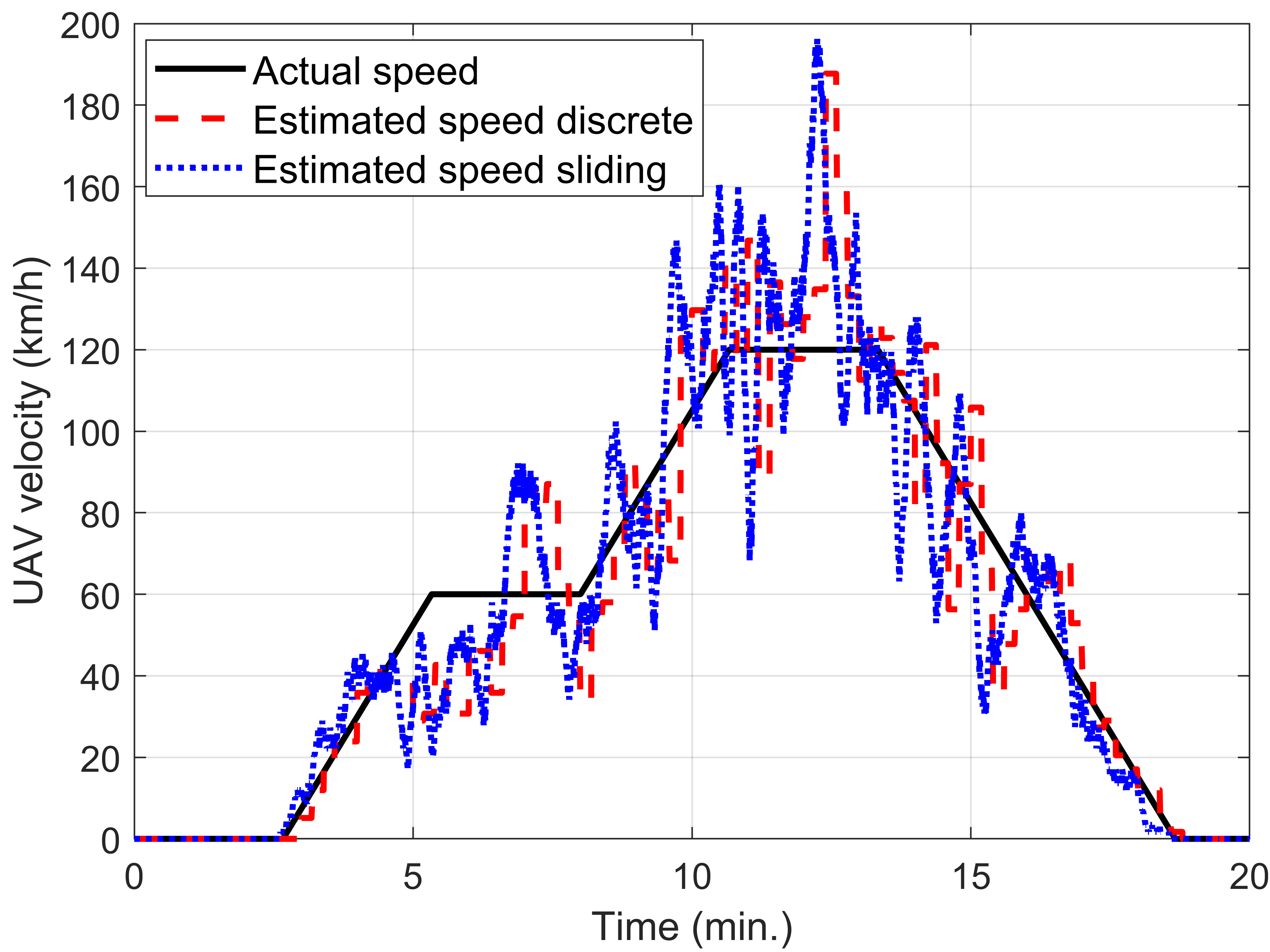}} \\
		\subfloat[]{
			\includegraphics[width=.75\linewidth]{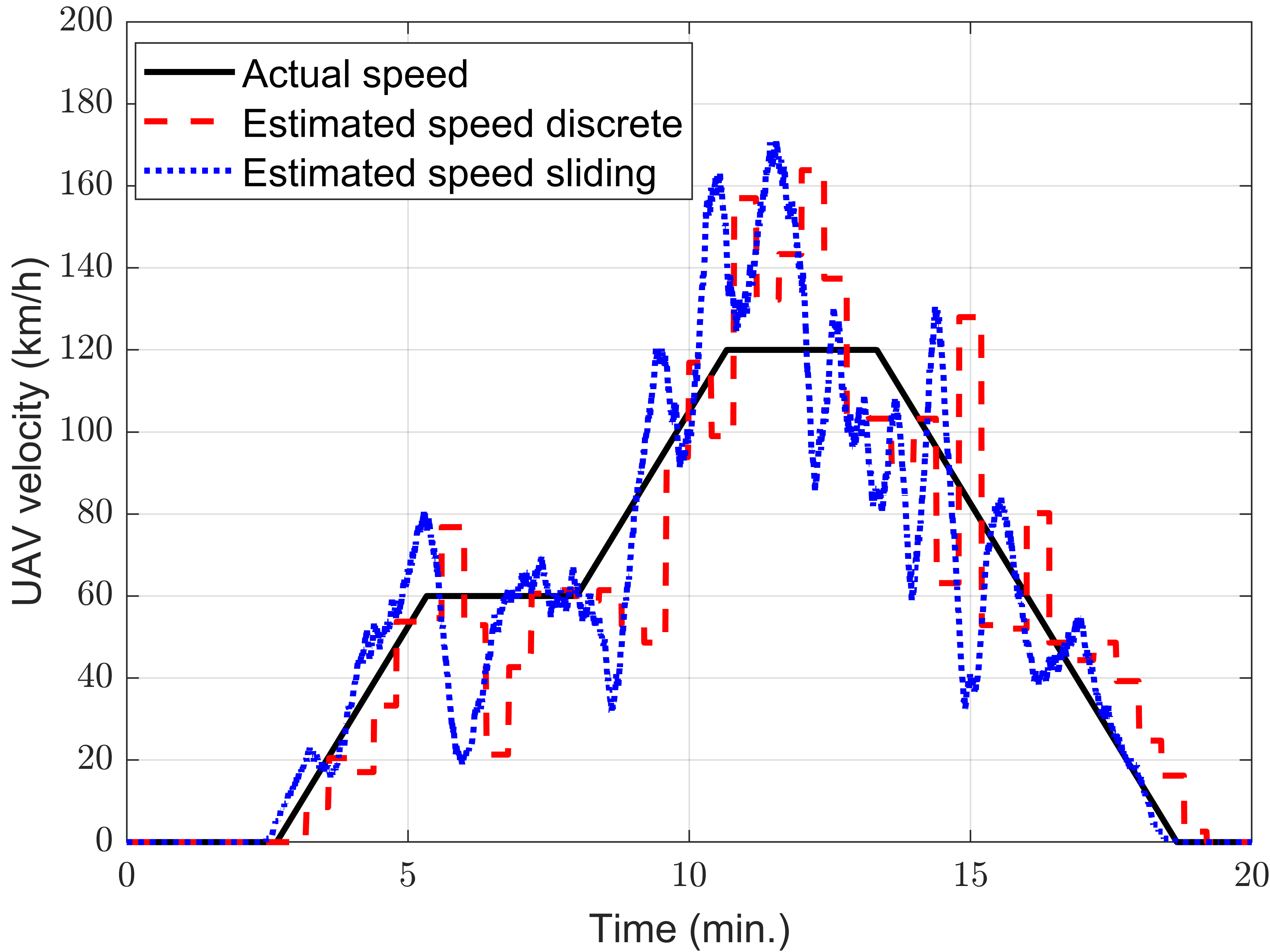}} 
    \caption{{HOC-based speed estimation for variable UAV speed; (a) $T=12$~s; (b) $T=24$~s.}}
    \label{fig:variable_velocity_MSD}
    \end{figure}
    
\begin{figure}[t]
\centering
	\subfloat[]{
			\includegraphics[width=.75\linewidth]{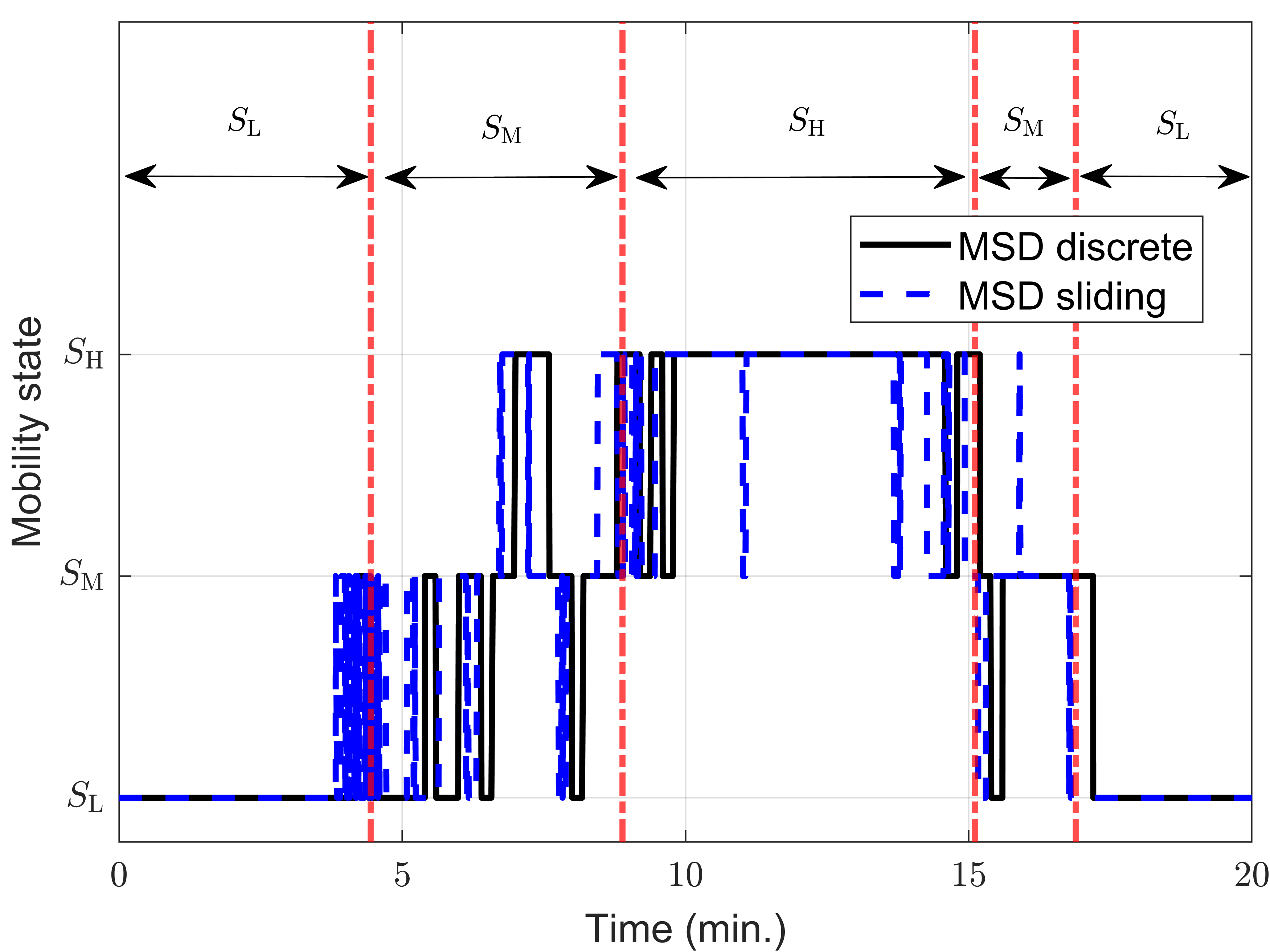}} \\
		\subfloat[]{
			\includegraphics[width=.75\linewidth]{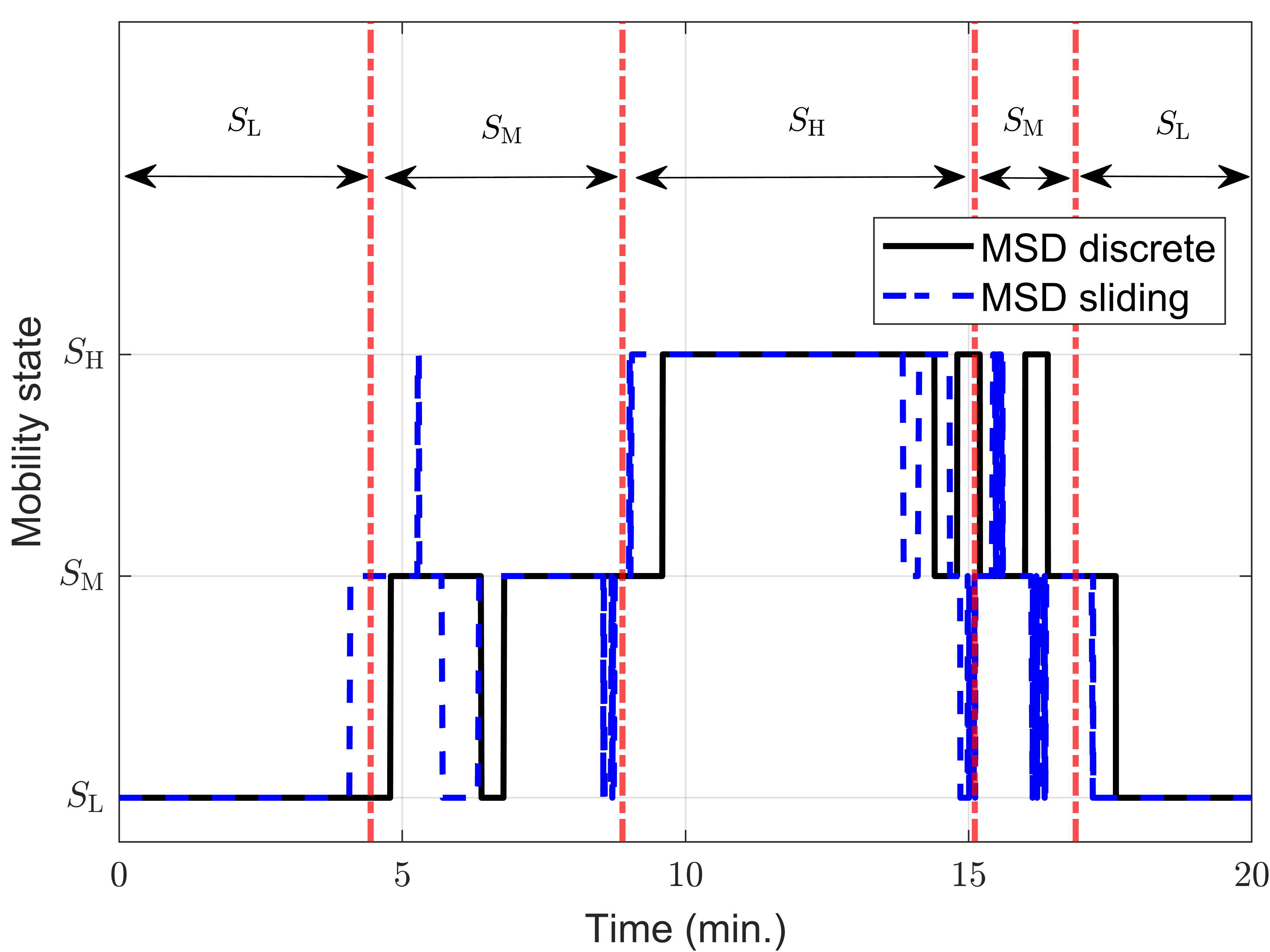}} 
    \caption{{HOC-based mobility state detection for variable UAV speed; (a) $T=12$~s; (b) $T=24$~s.}}
    \label{fig:variable_velocity_MSD2}
    \end{figure}

\section{Concluding Remarks}
\label{sec:Conc}
In this paper, we have approximated the HOC PMFs for aerial UEs in a cellular network using the Gaussian distributions for various heights, handover parameters, and antenna configurations. We have observed that the HOC trends do not change significantly with increasing GBS density, while for high TTT, the UAV tends to make fewer handovers for high UAV speeds. We have also estimated the Gaussian parameters as a function of GBS density, HOC measurement duration, and UAV speed, based on which we have proposed a simple UAV speed estimator. The approximated PMFs have shown high-quality fits with very low MSE. We have also derived the CRLB and provided an MVU estimator analysis from which we have proposed a simple biased estimator. Afterward, this estimator is used to determine the mobility states of the cellular-connected UAV. Our results show that the CRLB of the estimated speed and the mobility state misdetection probability decrease with larger HOC measurement time interval and higher GBS density, which represents the trade-off between the accuracy and the quickness of the speed measurements. We have also shown the effectiveness of our estimator for cases where a UAV flies with different speeds.

Our proposed speed estimation framework can be extended in several ways. For instance, the HOC measurement time window and TTT parameter can be dynamically adjusted based on the past samples of the estimated speeds. Other than considering all the GBSs in the network, the UAV can create a shortlist of only a few GBSs as done in~\cite{galkin2020}. The proposed estimation framework can also be applied in a multi-operator connectivity scheme, where the UAV can associate simultaneously with multiple network providers~\cite{colpaert2021drone}. Though this type of multi-connectivity can provide $99\%$ coverage probability for UAVs, it may increase the network load. In~\cite{drone_simu}, the authors introduced an ns-$3$ based simulator for cellular-connected UAVs with real-world GBS configurations. In~\cite{zeng_CKM}, the authors studied the UAV trajectory optimization problem based on the large-scale knowledge of the propagation environment. Such realistic simulators or radio maps can be used to approximate the realistic HOC PMF or the UAV speed efficiently. In this paper, we did not consider any coverage smoothening technique such as those in~\cite{moin_ICC,chowdhury2021ensuring} to overcome the challenges associated with the scattered GBS association patterns which is another interesting research direction. Our future work will also include estimation of the UAV speed with a three-dimensional flight trajectory, where the UAV can change its altitude in dedicated UAV corridors.




\bibliographystyle{IEEEtran} 
\bibliography{ref}

\end{document}